\documentclass[12pt]{iopart}
\usepackage{iopams,amsthm}

\newcommand{\dv}{\,\mathrm{div}\,}
\newcommand{\rot}{\mathrm{rot}\,}
\newcommand{\const}{\mathrm{const}}
\newcommand{\bu}{\mathbf{u}}
\newcommand{\bx}{\mathbf{x}}

\newcommand{\bb}{\mathbf{b}}
\newcommand{\bbmu}{\bar{\bmu}}

\newcommand{\tbepsilon}{\widetilde{\bepsilon}}
\newcommand{\tbkappa}{\widetilde{\bkappa}}

\newcommand{\bbnu}{\bar{\bnu}}
\newcommand{\bA}{\mathbf{A}}

\newcommand{\ba}{\mathbf{a}}
\newcommand{\bk}{\mathbf{k}}

\newcommand{\bd}{\mathbf{d}}

\newcommand{\bB}{\mathbf{B}}

\newcommand{\bv}{\mathbf{v}}
\newcommand{\be}{\mbox{\boldmath $e$}}
\newcommand{\pd}[2]{\frac{\partial #1}{\partial #2}}

\newcommand{\pdd}[3]{\frac{\partial^2 #1}{\partial #2 \partial #3}}

\newcommand{\tbv}{\widetilde{\bf v}}
\newcommand{\tbB}{\widetilde{\bf B}}

\theoremstyle{plain}
\newtheorem{theorem}{Theorem}

\newtheorem{lemma}{Lemma}

\theoremstyle{definition}

\theoremstyle{remark}

\newtheorem{remark}{Remark}

\eqnobysec

\begin{document}

\begin{center}
\bf
COMPLETE CLASSIFICATION OF STATIONARY FLOWS WITH CONSTANT TOTAL PRESSURE OF IDEAL INCOMPRESSIBLE INFINITELY CONDUCTING FLUID

\bigskip

S.V. Golovin$^{1,2}$, M.K. Krutikov$^2$

\bigskip

$^1$ Lavrentyev Institute of Hydrodynamics SB RAS\\
$^2$ Novosibirsk State University

\rm
\end{center}

\bigskip
\small
\parbox[b]{14.5cm}{The exhaustive classification of stationary incompressible flows with constant total pressure of ideal infinitely electrically conducting fluid is given. By introduction of curvilinear coordinates based on streamlines and magnetic lines of the flow the system of magnetohydrodynamics (MHD) equations is reduced to a nonlinear vector wave equation extended by the incompressibility condition in a form of a generalized Cauchy integral. For flows with constant total pressure the wave equation is explicitly integrated, whereas the incompressibility condition is reduced to a scalar equation for functions, depending on different sets of variables. The central difficulty of the investigation is the separation of variables in the scalar equation, and integration of the resulting overdetermined systems of nonlinear partially differential equations. The canonical representatives of all possible types of solutions together with equivalence transformations, that extend the canonical set to the whole amount of solutions are found.}

\normalsize


\bigskip
\section*{Introduction}
The mathematical model of ideal magnetohydrodynamics (MHD) describes macroscopic motions of infinitely electrically conducting fluid under the action of internal pressure, magnetic and inertia forces. The area of applicability of the model is extremely large: from problems of magnetic confinement to astrophysics. The popularity of the model is explained by its relative simplicity, by the wealth of its mathematical content, and by the variety of described physical effects. In most applications it is required to have solutions that model essentially three-dimensional motions of fluid. At that, the numerical or analytical analysis is complicated by the nonlinearity of equations. In this respect exact solutions that describe three-dimensional flows of conducting fluid are of interest.

Advancement in the construction of exact solutions can be reached by introduction of special curvilinear system of coordinates  \cite{Golovin2010}, determined by the geometry of streamlines and magnetic lines. In this form MHD equations are partially integrated and reduced to a vector wave equation extended by the incompressibility condition in the from of the generalized Cauchy integral. Any solution to the equations immediately gives explicit description of magnetic lines and streamlines of the flow.

This work is devoted to complete description of ideal electrically conducting fluid flows with constant total pressure $p+\frac{1}{2}|\bB|^2=\const$. For this class of solutions the vector wave equation is integrated explicitly. Substitution of the obtained representation of solution into the Cauchy integral yields a nonlinear scalar equation, that contains functions, depending on different sets of arguments. Separation of variables in the scalar equation, and subsequent integration of obtained overdetermined systems of partially differential equation is the main difficulty of the research. The significant simplification of calculations is reached by the systematic use of equivalence transformations, which allow one to discard insufficient arbitrary functions and constants. The separation of variables performed in Theorem \ref{T1} produces a set of 9 overdetermined systems of equations. Integration of each system with the use of equivalence transformations leads to the formulated in Theorem \ref{T2} main result of the paper: the complete list of canonical forms of solutions, describing MHD flows with constant total pressure. It is shown that in all such flows Maxwellian surfaces bearing streamlines and magnetic lines of the flow belong to a class of translational surfaces, i.e. are swept by the parallel translation of some 3D curve, called generatrix, along another 3D curve called directrix. In Theorem \ref{T3} it is proved that all solutions belong to either of two main classes: flows with cylindrical Maxwellian surfaces, or flows with Maxwellian surfaces having a common directrix.

Class of stationary ideal MHD flows with constant total pressure was earlier investigated in \cite{Golovin2010}. Here the curvilinear system of coordinates was introduced, general properties of solutions were investigated, some exact solutions describing motions with constant total pressure were obtained. In \cite{Schief2003} with the use of a similar curvilinear coordinates there were obtained exact solutions, where pressure is constant not in the whole area occupied by the flow, but only on toroidal Maxwellian surfaces. Classes of solutions with constant total pressure in case of non-stationary flows were obtained in \cite{Golovin2011}. In particular, solutions describing flows with knotted or toroidal Maxwellian surfaces were constructed. Note, that problem of complete description of such solutions in non-stationary case is technically more complicated and is not solved yet.

\section{Preliminary information} Equations of ideal magnetohydrodynamics (MHD), describing stationary flows of ideal infinitely conducting fluid with frozen-in magnetic field have the following dimensionless form:
\begin{equation}\label{SBMHD:MHD}
\begin{array}{l}
\rho(\bu\cdot\nabla)\bu+\bB\times\rot\bB+\nabla p=0,\;\;\;\rot(\bu\times\bB)=0,\\[2mm]
\dv\bu=0,\;\;\;\dv\bB=0.
\end{array}
\end{equation}
Here $\bu$ is the velocity, $\bB$ is the magnetic field, $\rho$ is the density, $p$ is the pressure. Density $\rho$ is assumed to conserve along streamlines: $\bu\cdot\nabla\rho=0$. The following functions are introduced:
\begin{equation}\label{NewVars}
\bv=\bu\,\sqrt{\rho(\bx)},\;\;\;P=p+\frac{1}{2}\,\bB^2
\end{equation}
Function $P$ will be referred to as the total pressure. With these functions system (\ref{SBMHD:MHD}) can be rewritten in the following form:
\begin{equation}\label{SBMHD:MHD1}
\begin{array}{l}
(\bv\cdot\nabla)\,\bv-(\bB\cdot\nabla)\,\bB+\nabla P=0,\\[2mm]
(\bB\cdot\nabla)\bv=(\bv\cdot\nabla)\bB,\\[2mm]
\dv\bv=0,\;\;\;\dv\bB=0.
\end{array}
\end{equation}
Equation for density $\rho$ splits off from the system (\ref{SBMHD:MHD1}) and can be solved separately. Equations
(\ref{SBMHD:MHD1}) form an overdetermined compatible system of 8 equations for 7 sought functions.

Except for the obvious symmetry group of transformations consisting of shifts along the coordinate axes, rotations, dilatations and shift of the complete pressure, system of equations (\ref{SBMHD:MHD1}) admits the following transformation of velocity and magnetic vector fields \cite{Bogoyavl2002, Schief2003}:
\begin{equation}\label{SBMHD:Transform}
\tbv=\bB\sinh f(\varphi)+\bv\cosh f(\varphi),\quad\tbB=\bB\cosh f(\varphi)+\bv\sinh f(\varphi).
\end{equation}
Here $\varphi$ is an arbitrary function that conserves along streamlines and magnetic lines of the flow:
\begin{equation}\label{SBMHD:Phi}
(\bv\cdot\nabla)\varphi=0,\;\;\;(\bB\cdot\nabla)\varphi=0.
\end{equation}
The symmetry group of system (\ref{SBMHD:MHD1}) will be used for simplification of the form of obtained solutions.

\section{Curvilinear system of coordinates} System of equations (\ref{SBMHD:MHD1}) has more symmetric form in terms of the following vector fields, also known as Elsa\"{a}sser variables \cite{Elsaesser1950}:
\begin{equation}\label{SBMHD:1}
\ba=\bv-\bB,\quad\bb=\bv+\bB
\end{equation}
or
\begin{equation}\label{SBMHD:2}
\bv=\frac{1}{2}(\bb+\ba),\quad\bB=\frac{1}{2}(\bb-\ba).
\end{equation}
In terms of these vector fields system (\ref{SBMHD:MHD1}) is transformed to
\numparts
\begin{eqnarray}\label{SBMHD:3}
&&(\ba\cdot\nabla)\bb+\nabla P=0,\\[2mm]\label{SBMHD:4}
&&(\ba\cdot\nabla)\bb=(\bb\cdot\nabla)\ba,\\[2mm]\label{SBMHD:5}
&&\dv\ba=0,\;\;\;\dv\bb=0.
\end{eqnarray}
\endnumparts

Note that equation (\ref{SBMHD:4}) is equivalent to the equality to zero of the commutator of vector fields $\ba$ and $\bb$:
\begin{equation}\label{SBMHD:7}
[\ba,\bb]:=(\ba\cdot\nabla)\bb-(\bb\cdot\nabla)\ba=0.
\end{equation}
Under the assumption of linear independence of these vector fields, the commutativity condition (\ref{SBMHD:7}) implies that the fields can be chosen as basic ones for a certain curvilinear system of coordinates \cite{Schutz1980en}. Indeed, let us observe a curvilinear system of coordinates $(k^1,k^2,k^3)$ such that $k^1$- and $k^2$-coordinate lines coincide with integral curves of vector fields $\ba$ and $\bb$ accordingly. In other words, in the space $\mathbb{R}^3$ the following non-degenerate mapping is introduced
\begin{equation}\label{SBMHD:8}
\bx=\bx(\bk), \quad\det\left(\pd{\bx}{\bk}\right)\ne0
\end{equation}
such that
\begin{equation}\label{SBMHD:9}
\ba=\frac{\partial\bx}{\partial k^1},\quad\bb=\frac{\partial\bx}{\partial k^2}.
\end{equation}
In this representation equation (\ref{SBMHD:7}) is satisfied automatically as a compatibility condition of equations (\ref{SBMHD:9}) for unknown vector $\bx(\bk)$. The remaining equations (\ref{SBMHD:3}), (\ref{SBMHD:5}) lead to the following system of equations for unknown functions $\bx=\bx(\bk)$, $P=P(\bk)$ (details can be found in \cite{Golovin2010}):
\begin{equation}\label{SBMHD:Main}
\fl
\begin{array}{l}
\displaystyle \pdd{\bx}{k^1}{k^2}+
\pd{P}{k^1}\left(\pd{\bx}{k^2}\times\pd{\bx}{k^3}\right) +
\pd{P}{k^2}\left(\pd{\bx}{k^3}\times\pd{\bx}{k^1}\right) +
\pd{P}{k^3}\left(\pd{\bx}{k^1}\times\pd{\bx}{k^2}\right)
=0,\\[4mm]
\displaystyle \pd{\bx}{k^1}\cdot\left(\pd{\bx}{k^2}\times\pd{\bx}{k^3}\right)=1.
\end{array}
\end{equation}
Henceforth `` $\cdot$ '', and `` $\times$ '' denote the Euclidean dot and vector products in $\mathbb{R}^3$ correspondingly. The first equation in (\ref{SBMHD:Main}) can be treated as a nonlinear vector wave equation for components of vector $\bx$. The second equation in (\ref{SBMHD:Main}) represents an incompressibility condition of both fluid an magnetic field. System of equations (\ref{SBMHD:Main}) is equivalent to the system (\ref{SBMHD:3})--(\ref{SBMHD:5}) on solutions with non-collinear vector fields $\ba$ and $\bb$.

Parametrization of the solution in terms of the dependence $\bx(\bk)$ gives the explicit description of the geometry of the fluid motion. For example, important in applications Maxwellian surfaces bearing magnetic field lines and streamlines of the flow, in this parametrization are given by $\bx=\bx(k^1,k^2,k^3_0)$ with arbitrary parameters $k^1$, $k^2$, and fixed parameter $k^3=k^3_0$. Each Maxwellian surface can be taken as an infinitely conducting wall bounding the area of the flow.

Note that system of equations (\ref{SBMHD:Main}) is invariant under the following change of parameters
\begin{equation}\label{SBMHD:BT}
\tilde{k}^1=\varphi(k^3) k^1,\quad\tilde{k}^2=\frac{k^2}{\varphi(k^3)},\quad \tilde{k}^3=k^3
\end{equation}
with an arbitrary function $\varphi$. This symmetry follows from the transformation (\ref{SBMHD:Transform}), admitted by the initial system (\ref{SBMHD:MHD1}). Besides, system of equations (\ref{SBMHD:Main}) is invariant under transformations
\begin{equation}\label{SBMHD:IT}
\tilde{k}^1=k^1+\psi(k^3), \quad\tilde{k}^2=k^2+\chi(k^3)
\end{equation}
with arbitrary functions $\psi$ and $\chi$. In what follows formulae (\ref{SBMHD:BT}), (\ref{SBMHD:IT}) will be used as equivalence transformations on the set of solutions of equations (\ref{SBMHD:Main}).

\section{Flows with constant total pressure} System (\ref{SBMHD:Main}) possesses a class of solutions where the total pressure $P$ is constant over the whole area occupied by the flow:
\[P=\const.\]
In this case the first equation of (\ref{SBMHD:Main}) is reduced to a linear one, and can be integrated explicitly as
\begin{equation}\label{SBMHD:14}
\pdd{\bx}{k^1}{k^2}=0\quad\Leftrightarrow\quad \bx=\bsigma(k^1,k^3)+\btau(k^2,k^3).
\end{equation}
Here $\bsigma$ and $\btau$ are vector functions depending on different independent variables. From (\ref{SBMHD:14}) it immediately follows that Maxwellian surfaces on this class of solutions are translational surfaces. They are spanned by parallel translation of the curve $\bx=\bsigma(k^1,k^3_0)$ along the curve $\bx=\btau(k^2,k^3_0)$ with a fixed parameter $k^3=k^3_0$.

Substitution of representations (\ref{SBMHD:14}) into the second of equations (\ref{SBMHD:Main}) gives the restriction for vectors $\bsigma$ and $\btau$:
\begin{equation}\label{SBMHD:15}
\pd{\bsigma}{k^1}\cdot\left[\pd{\btau}{k^2}\times\left(\pd{\bsigma}{k^3}+\pd{\btau}{k^3}\right)\right]=1.
\end{equation}
Now the description of flows with constant total pressure is reduced to a separation of variables $k^1$ and $k^2$ in equation (\ref{SBMHD:15}) and integration of obtained overdetermined systems of equations.

Henceforth the following notations are adopted. The standard Cartesian basis of orthonormal constant vectors in $\mathbb{R}^3$ is denoted by $\be^1$, $\be^2$, $\be^3$. The lower index $i$ denotes the derivative with respect to $k^i$ (e.g., $\bsigma_1\equiv \partial\bsigma/\partial k^1$). Bold Roman or Greek letters denote vectors.

\section{Separation of variables} Note that equations (\ref{SBMHD:15}) can be written as the equality to unity of the dot product of two vectors in $\mathbb{R}^6$, such that each vector depends on its own set of variables:
\begin{equation}\label{SBMHD:17}
\bA\cdot \bB=1,
\end{equation}
where
\begin{equation}\label{SBMHD:26}
\bA=(\bsigma_3\times\bsigma_1,\bsigma_1)^T,\quad\bB=(\btau_2,\btau_2\times\btau_3)^T
\end{equation}
Here $\bA=\bA(k^1,k^3)$, $\bB=\bB(k^2,k^3)$, symbol ``$T$'' denotes the transpose of a matrix or a vector. Separation of variables $k^1$ and $k^2$ in equation (\ref{SBMHD:17}) is managed by the following Lemma, valid for general $n$-dimensional vectors $\bA(x)$, $\bB(y)$ depending on different variables $x$ and $y$ \cite{Golovin2002en}:

\begin{lemma}\label{SBMHD:l2}
Equation (\ref{SBMHD:17}) is satisfied for vectors
\[\bA=\bigl(a^1(x),\ldots,a^n(x)\bigr)^T,\quad\bB=\bigl(b^1(y),\ldots,b^n(y)\bigr)^T\]
identically on variables $x$ and $y$ if and only if there exists a number $s$ $(0 \le s \le n-1)$, constant $n\times s$ matrix $C$ of rank $s$, constant vector $\bd\in\mathbb{R}^n$ and a set of $s$ non-constant linearly independent functions, forming a vector $\bu(x)$, such that, the following equations are satisfied:
\begin{equation}\label{SBMHD:25}
\bA=C\bu(x)+\bd;\quad \bB^TC=0,\quad \bB^T\bd=1.
\end{equation}
\end{lemma}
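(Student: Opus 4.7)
My plan is to prove the lemma as a two-directional statement, with the forward (``if'') direction handled as a one-line verification and the reverse (``only if'') direction obtained by identifying $s$ intrinsically with the dimension of the affine hull of the image of $\bA$.

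For the ``if'' direction, I would just substitute: assuming $\bA=C\bu(x)+\bd$, $\bB^TC=0$, $\bB^T\bd=1$, compute $\bA\cdot\bB=\bB^T(C\bu(x)+\bd)=(\bB^TC)\bu(x)+\bB^T\bd=0+1=1$, so the identity in $x,y$ holds.

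The core work is the converse. Let $s$ be the dimension of the affine span of $\{\bA(x)\}\subset\mathbb{R}^n$; note $s\le n$ but the constraint $\bA\cdot\bB=1$ with $\bB\not\equiv 0$ forces $s\le n-1$ (if $s=n$, then the affine span is all of $\mathbb{R}^n$ and the linear equation $\bB^T\bA=1$ cannot be satisfied by a $y$-dependent $\bB$ alone). Pick any $x_0$ and set $\bd=\bA(x_0)$; then $\bA(x)-\bd$ lies in an $s$-dimensional linear subspace $V\subset\mathbb{R}^n$. Choose a basis $v_1,\ldots,v_s$ of $V$ and let $C=[v_1|\cdots|v_s]$, so $\orank C=s$ and there exist uniquely determined scalar functions $u^1(x),\ldots,u^s(x)$ with $\bA(x)=C\bu(x)+\bd$. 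Minimality of $s$ as the affine dimension forces each $u^i$ to be non-constant (otherwise the constant part of $u^i v_i$ could be absorbed into $\bd$, lowering the affine dimension) and the family $\{u^i\}$ to be linearly independent as functions of $x$ (since by the definition of $s$ one can select $x_1,\ldots,x_s$ with $\bA(x_j)-\bA(x_0)$, equivalently $C(\bu(x_j)-\bu(x_0))$, linearly independent, and this forces $\bu(x_j)-\bu(x_0)$ to be linearly independent as vectors in $\mathbb{R}^s$).

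To finish, substitute $\bA=C\bu(x)+\bd$ into $\bA\cdot\bB=1$ to obtain the identity $(\bB(y)^TC)\,\bu(x)+\bB(y)^T\bd=1$, valid for all $x,y$. For any fixed $y=y_0$, evaluation at the points $x_0,x_1,\ldots,x_s$ yields $(\bB(y_0)^TC)(\bu(x_j)-\bu(x_0))=0$ for $j=1,\ldots,s$; since the $\bu(x_j)-\bu(x_0)$ span $\mathbb{R}^s$, this gives $\bB(y_0)^TC=0$, and then $\bB(y_0)^T\bd=1$, for every $y_0$. This delivers the three conditions in (\ref{SBMHD:25}). The only step requiring care — and the one I expect to be the main obstacle to write cleanly — is the argument that $s$ chosen as the affine dimension forces both the non-constancy of each $u^i$ and the linear independence of the family as functions, so that no spurious absorption into $\bd$ or reduction of $C$ is possible; the affine-independence of the sample points $\bu(x_0),\ldots,\bu(x_s)$ is the clean way to package this.
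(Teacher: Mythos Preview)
Your proof is correct and follows essentially the same idea as the paper's: both identify the relevant dimension of the image of $\bA$, build $C$ and $\bd$ from a basis, and then split the identity $\bB^T(C\bu+\bd)=1$ using linear independence of the coordinate functions. The only cosmetic difference is that the paper works with the \emph{linear} span (dimension $k$), writes $\bA(x)=\sum_{i=1}^k u^i(x)\bA^i$ with $u^i(x^j)=\delta^{ij}$, and then argues that compatibility with $\bA\cdot\bB=1$ forces the constant function $1$ to lie in $\mathrm{span}(u^1,\ldots,u^k)$, yielding $s=k-1$; your use of the \emph{affine} span reaches $s$ directly and sidesteps that step, which is arguably cleaner. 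You also supply the trivial ``if'' direction, which the paper omits.
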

\begin{proof}
Let the linear span of vectors $\bA(x)$ for various admissible values of $x$ has a dimension $k$. This implies that there exist exactly $k$ various values $x^i$, $i=1,\ldots,k$, for which constant vectors $\bA^i=\bA(x^i)$ are linearly independent. An arbitrary vector $\bA(x)$ can be represented as a linear combination of the constant vectors as
\begin{equation}\label{Adecomp}
\bA(x)=\sum\limits_{i=1}^k u^i(x)\bA^i.
\end{equation}
By virtue of (\ref{Adecomp}) one has $u^i(x^j)=\delta^{ij}$ with the Kronecker delta in the right-hand side. This implies linear independence of functions $u^i(x)$. For the compatibility of equations (\ref{SBMHD:17}) one function among $u^i$ should be constant. Suppose, that this function has index $k$, i. e. $u^k\equiv1$. By arranging vectors $\bA^1,\ldots,\bA^{k-1}$ in a matrix $C$ with $n$ columns and $s=k-1$ rows and denoting $\bd=\bA^k$ one obtains a representation of vector $\bA(x)$, specified in the formulation of the Lemma. Equations for vector $\bB(y)$ follows from the substitution of the representation of vector $\bA$ into equation (\ref{SBMHD:17}) and splitting with respect to linearly independent functions $u^i(x)$.
\end{proof}

In what follows Lemma \ref{SBMHD:l2} is used as the main tool for separation of variables in equation (\ref{SBMHD:17}) with vectors $\bA$, and $\bB$ determined by (\ref{SBMHD:26}). At that, parameters $k^1$ and $k^2$ play the role of variables $x$, and $y$. Matrix $C$ and vector $\bd$ are supposed to depend arbitrarily on variable $k^3$. Only cases $s=0,1,2$ will be successively investigated. By virtue of the symmetry of equation (\ref{SBMHD:17}) with respect to the change of variables $k^1\leftrightarrow k^2$, $\bsigma\leftrightarrow\btau$, cases $s=3,4,5$ are equivalent to the observed ones.

In what follows, the following equivalence transformations of equation (\ref{SBMHD:17}) will be used for simplification of the resulting equations and solutions:
\begin{itemize}
\item $\widetilde{\bsigma}=\bsigma-\bepsilon(k^3)$ , $\widetilde{\btau}=\btau+\bepsilon(k^3)$ with arbitrary vector $\bepsilon$;
\item Transformations (\ref{SBMHD:BT}) and (\ref{SBMHD:IT}).
\end{itemize}

The analysis is preceded by the description of two characteristic cases of integration of equation (\ref{SBMHD:15}).

\section{Case of cylindrical Maxwellian surfaces}\label{CylMax} Suppose that vector $\bsigma$ depends linearly on variable $k^1$:
\[\bsigma=k^{1}\bfeta(k^{3}).\]
Substitution of this representation into equation (\ref{SBMHD:15}) gives
\[k^1\btau_2\cdot\bigl(\bfeta(k^3)\times\bfeta'(k^3)\bigr)+\bfeta(k^3)\cdot\bigl(\btau_2\times\btau_3\bigr)=1\]
Hereafter prime denotes the derivative with respect to the single variable $k^3$. Splitting on $k^1$ produces a system of equations for vector $\btau$:
\begin{equation}\label{Cyl1}
\btau_2\cdot\bigl(\bfeta'(k^3)\times\bfeta(k^3)\bigr)=0,\quad \bfeta(k^3)\cdot\bigl(\btau_2\times\btau_3\bigr)=1.
\end{equation}
The first of equations (\ref{Cyl1}) is satisfied identically if $\bfeta=\const$. In this case without loss of generality one can assume  $\bfeta=\be^1$. The exact solution can be written as
\begin{equation}\label{SBMHD:s0}
\bx=k^1\be^1+\btau(k^2,k^3),\quad
\left|\begin{array}{cc}
\tau^2_2&\tau^3_2\\[1mm]
\tau^2_3&\tau^3_3
\end{array}\right|=1.
\end{equation}

For a non-constant vector $\bfeta$ the first of equations (\ref{Cyl1}) gives
\[\btau=u(k^2,k^3)\bfeta'(k^3)+v(k^2,k^3)\bfeta(k^3)+\bdelta(k^3).\]
By substituting this representation into (\ref{Cyl1}) after obvious transformations one obtains an equation for function $u$ in the form
\[\bfeta\cdot\left[\pd{u}{k^2}\bfeta'(k^3)\times\bigl(u\bfeta''(k^3)+\delta(k^3)\bigr)\right]=1.\]
Integration with respect to $k^2$ leads to the following exact solution:
\begin{equation}\label{SBMHD:slin}
\bx=k^1\bfeta(k^3)+u(k^2,k^3)\bfeta'(k^3)+v(k^2,k^3)\bfeta(k^3)+\bdelta(k^3).
\end{equation}
Here $\bfeta(k^3)$, and $\bdelta(k^3)$ are arbitrary vectors, $v(k^2,k^3)$ is an arbitrary function. Function $u(k^2,k^3)$ is determined by the solution of the quadratic equation
\begin{equation}\label{SBMHD:slin1}
\frac{\bfeta\cdot(\bfeta'\times\bfeta'')}{2}\,u^2+\bigl(\bfeta\cdot(\bfeta'\times\bdelta')\bigr)u=k^2.
\end{equation}
Due to the linear dependence of vector $\bx$ on parameter $k^1$, Maxwellian surfaces in this solution are exhausted by cylinders with the generatrix directed along vector $\beta$. At that, the directrix of the cylinder is defined with a significant functional arbitrariness by formulae above.

\section{Maxwellian surfaces with a fixed directing curve} Equation (\ref{SBMHD:15}) also possesses the following class of solutions:
\numparts
\begin{eqnarray}\label{Sol21}
&&\bx=\bsigma\bigl(\lambda(k^1,k^3)\bigr)+\btau(k^2,k^3),\\[2mm]\label{Sol22}
&&\bsigma\bigl(\lambda(k^1,k^3)\bigr)\cdot\balpha(k^3)=k^1,\\[2mm]\label{Sol23}
&&\btau_2\times\btau_3=\balpha(k^3).
\end{eqnarray}
\endnumparts
Here $\bsigma\bigl(\lambda)$ and $\balpha(k^3)$ are arbitrary vectors. Scalar equation (\ref{Sol22}) implicitly defines function $\lambda(k^1,k^3)$. Equations (\ref{Sol23}) with fixed right-hand side form a closed system of differential equations for three components of vector $\btau$.

\begin{remark}\label{rem2}
Solution (\ref{Sol21})--(\ref{Sol23}) can be generalized in the following way: If one of components of vector $\bsigma$ is identically zero, then the corresponding component of vector $\balpha$ can be assumed to be a function of variables $k^2$, and $k^3$ defined by vector $\btau$ from the same component of equation (\ref{Sol23}). In this case the total number of equations (\ref{Sol23}) for vector $\btau$ reduces.
\end{remark}
In what follows we will refer to solution (\ref{Sol21})--(\ref{Sol23}) taking into account Remark \ref{rem2}. In this class of solutions Maxwellian surfaces have common generatrix given by equation $\bx=\bsigma(\lambda)$. Parametrization of this curve varies from one Maxwellian surface to another, however the generatrix itself remains unchanged.

\section{Separation of variables (\ref{SBMHD:15})} By using Lemma \ref{SBMHD:l2}, after partial integration and simplification of the equations, one obtains the following

\begin{theorem}\label{T1} Equation (\ref{SBMHD:15}) after the separation of variables can be brought by the equivalence transformation to one of the following non-equivalent forms:

$\underline{{\bf s=0}}$
\begin{equation}\label{Sep11}
\left\{\begin{array}{l}
\bsigma_{3}\times\bsigma_{1}=0,\quad\bsigma_{1}=\bnu;\\[2mm]
\bnu\cdot(\btau_{2}\times\btau_{3})=1.
\end{array}\right.
\end{equation}

$\underline{{\bf s=1}}$
\begin{equation}\label{Sep21}
\left\{\begin{array}{l}
\bsigma_{3}\times\bsigma_{1}=0,\quad\bsigma_{1}=U(k^{1},k^{3})\bepsilon+\bkappa;\\[2mm]
\bepsilon\cdot(\btau_{2}\times\btau_{3})=0,\quad\bkappa\cdot(\btau_{2}\times\btau_{3})=1.
\end{array}\right.
\end{equation}
\begin{equation}\label{Sep22}
\left\{\begin{array}{l}
\bsigma_3\times\bsigma_1=\bnu\times\bepsilon,\quad\bsigma_1=U(k^1,k^3)\bepsilon;\\[2mm]
\bnu\cdot(\bepsilon\times\btau_2)=1,\quad \btau_3\cdot(\bepsilon\times\btau_2)=0.
\end{array}\right.
\end{equation}
\begin{equation}\label{Sep23}
\left\{\begin{array}{l}
\bsigma_3\times\bsigma_1=U(k^1,k^3)\bmu\times\bkappa,\quad\bsigma_1=\bkappa;\\[1mm]
\bmu\cdot(\bkappa\times\btau_2)=0,\quad \btau_3\cdot(\bkappa\times\btau_2)=1.
\end{array}\right.
\end{equation}

$\underline{{\bf s=2}}$

\begin{equation}\label{Sep31}
\left\{\begin{array}{l}
\bsigma_3\times\bsigma_1=0,\quad\bsigma_1=U^1(k^1,k^3)\bepsilon^1 +U^2(k^1,k^3)\bepsilon^2 +\bkappa;\\[2mm]
      \bepsilon^1\cdot(\btau_2\times\btau_3)=0,\quad\bepsilon^2\cdot(\btau_2\times\btau_3)=0,
      \quad\bkappa\cdot(\btau_2\times\btau_3)=1.
\end{array}\right.
\end{equation}

\begin{equation}\label{Sep32}
\left\{\begin{array}{l}
\bsigma_3\times\bsigma_1=\bepsilon^1\times\bepsilon^2,\quad \bsigma_1=U^1(k^1,k^3)\bepsilon^1 +U^2(k^1,k^3)\bepsilon^2 ;\\[2mm]
      \btau_2\cdot(\bepsilon^1\times\bepsilon^2)=1,\quad\bepsilon^1\cdot(\btau_2\times\btau_3)=0,
      \quad\bepsilon^2\cdot(\btau_2\times\btau_3)=0.
\end{array}\right.
\end{equation}

\begin{equation}\label{Sep33}
\left\{\begin{array}{l}
\bsigma_3\times\bsigma_1=U^2(k^1,k^3)\bepsilon\times\bkappa,\quad \bsigma_1=U^1(k^1,k^3)\bepsilon +\bkappa ,\\[2mm]
      \btau_2\cdot(\bepsilon\times\bkappa)=0,\quad\bepsilon\cdot(\btau_2\times\btau_3)=0,\quad\bkappa\cdot(\btau_2\times\btau_3)=1.
\end{array}\right.
\end{equation}

\begin{equation}\label{Sep34}
\left\{\begin{array}{l}
\bsigma_3\times\bsigma_1=\bigl(U^2(k^1,k^3)\bmu+\bnu\bigr)\times\bepsilon,\quad\bsigma_1=U^1(k^1,k^3)\bepsilon;\\[2mm]
      \btau_2\cdot(\bmu\times\bepsilon)=0,\quad\btau_2\cdot(\bnu\times\bepsilon)=1,\quad\bepsilon\cdot(\btau_2\times\btau_3)=0.
\end{array}\right.
\end{equation}

\begin{equation}\label{Sep35}
\left\{\begin{array}{l}
\bsigma_3\times\bsigma_1=\bigl(U^1(k^1,k^3)\bmu^1+U^2(k^1,k^3)\bmu^2\bigr)\times\bkappa,\quad\bsigma_1=\bkappa;\\[2mm]
      \btau_2\cdot(\bmu^1\times\bkappa)=0,\quad\btau_2\cdot(\bmu^2\times\bkappa)=0,\quad\bkappa\cdot(\btau_2\times\btau_3)=1.
\end{array}\right.
\end{equation}
In these equations the sought vectors are $\bsigma(k^1,k^3)$, and $\btau(k^2,k^3)$; their lower indices $i$ denote derivatives with  respect to $k^i$. All the remaining vectors depend arbitrarily on variable $k^3$, and are linearly independent for every value of $k^3$. Arbitrary functions $U$, $U^1$, $U^2$, and $1$ are linearly independent as functions on variable $k^1$ over the whole domain of each system.
\end{theorem}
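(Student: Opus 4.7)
The plan is to apply Lemma~\ref{SBMHD:l2} directly to equation~(\ref{SBMHD:17}) with $n=6$, treating $k^3$ as a parameter so that the matrix $C=C(k^3)$ and the vector $\bd=\bd(k^3)$ may depend on it arbitrarily. Split $\bA=C\bu+\bd$ according to $\bA=(\bsigma_3\times\bsigma_1,\bsigma_1)^T$ into upper and lower $3$-blocks, writing $C=(C',C'')^T$ and $\bd=(\bd',\bd'')^T$. The lower block gives $\bsigma_1=C''\bu(k^1,k^3)+\bd''$, which integrates in $k^1$ (modulo the shift equivalence $\bsigma\to\bsigma-\bepsilon(k^3)$) to produce an explicit form for $\bsigma$, hence for $\bsigma_3$, and hence for $\bsigma_3\times\bsigma_1$. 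Requiring this latter expression to coincide with $C'\bu+\bd'$ identically in $k^1$ produces algebraic relations among the columns of $C',C''$ and the vectors $\bd',\bd''$. Simultaneously the conditions $\bB^TC=0$ and $\bB^T\bd=1$ translate directly into the scalar equations for $\btau_2$ and $\btau_2\times\btau_3$ appearing on the right-hand sides of the listed systems. The symmetry $k^1\leftrightarrow k^2$, $\bsigma\leftrightarrow\btau$ interchanges $s$ with $n-1-s$, so only $s=0,1,2$ need be treated.

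For $s=0$, $\bA$ is independent of $k^1$, which forces $\bsigma_1=\bnu(k^3)$ and, after the shift equivalence, $\bsigma=k^1\bnu(k^3)$; constancy in $k^1$ of $\bsigma_3\times\bsigma_1=k^1(\bnu'\times\bnu)$ then forces $\bnu'\times\bnu=0$, yielding~(\ref{Sep11}). For $s=1$, writing $C=(\bc',\bc'')^T$ as a single column of $\mathbb{R}^6$, the three non-equivalent systems~(\ref{Sep21})--(\ref{Sep23}) correspond respectively to (a) $\bc''\ne 0$ with a constant-in-$k^1$ piece $\bd''$ linearly independent of $\bc''$, (b) $\bc''\ne 0$ with $\bd''$ absent (or parallel to $\bc''$), and (c) $\bc''=0$. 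For $s=2$ an analogous but finer classification, indexed by the rank of $C''$ (one or two independent $k^1$-functions inside $\bsigma_1$) and by the presence or absence of a constant-in-$k^1$ term linearly independent from the column span of $C''$, produces the five systems~(\ref{Sep31})--(\ref{Sep35}). In every subcase the equivalence transformations~(\ref{SBMHD:BT}), (\ref{SBMHD:IT}) are used to rescale and shift the arbitrary functions $U,U^1,U^2$ and to absorb redundant $k^3$-dependent factors.

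The main obstacle is twofold. First, Lemma~\ref{SBMHD:l2} does not produce a unique representation: there is a residual $\mathrm{GL}_s$-action on $(C,\bu)$, together with shifts that move constants between $\bd$ and the column span of $C$, and these internal freedoms must be tracked together with the equivalence transformations~(\ref{SBMHD:BT}),~(\ref{SBMHD:IT}) and $\bsigma\to\bsigma-\bepsilon$ in order both to bring each subcase to the canonical form listed and to rule out accidental overlap between distinct systems. Second, the compatibility between the upper and lower blocks of $\bA$ --- forced by the identity $(\bsigma_3\times\bsigma_1)\cdot\bsigma_1\equiv 0$ together with the structural requirement that the upper block genuinely be a cross product with $\bsigma_1$ --- generates polynomial-in-$k^1$ identities that must be split with respect to the linearly independent functions $1,U,U^1,U^2$ and the additional functions $F^i=\int U^i\,dk^1$ arising from integration in $k^1$. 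This splitting is what forces the specific cross-product shapes on the right-hand sides of the first equations in~(\ref{Sep22})--(\ref{Sep35}).
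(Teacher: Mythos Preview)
Your overall strategy --- apply Lemma~\ref{SBMHD:l2} with $k^3$ as a parameter, split the six-vector into upper and lower three-blocks, and classify subcases by the rank of the lower block $C''$ together with the position of $\bd''$ relative to its column span --- is exactly the paper's route, and your enumeration of the subcases for $s=0,1,2$ matches the paper's case-by-case breakdown (which it phrases as $\dim\{\bepsilon,\bkappa\}$ for $s=1$ and $\dim\{\bepsilon^1,\bepsilon^2,\bkappa\}$ for $s=2$).

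The one substantive difference is in how the upper block is handled. You propose to integrate $\bsigma_1=C''\bu+\bd''$ in $k^1$, compute $\bsigma_3\times\bsigma_1$ explicitly, and match it against $C'\bu+\bd'$; this is what generates your anticipated obstacle of splitting with respect to the enlarged family $\{1,U^i,F^i,\dots\}$. The paper sidesteps this entirely: it never integrates $\bsigma_1$. Instead it takes the dot product of the two three-block equations and uses only the algebraic identity $(\bsigma_3\times\bsigma_1)\cdot\bsigma_1=0$, which after splitting with respect to products of the $U^i$ alone yields orthogonality relations such as $\bmu^i\cdot\bepsilon^i=0$, $\bmu^i\cdot\bepsilon^j+\bmu^j\cdot\bepsilon^i=0$, etc. These relations force each $\bmu^i$ (and $\bnu$) to be a cross product of a \emph{common} vector $\bbmu(k^3)$ with the corresponding $\bepsilon^i$ (or $\bkappa$), so that the upper block becomes $\bbmu\times\bsigma_1$ and the first equation reads $(\bsigma_3-\bbmu)\times\bsigma_1=0$; the shift equivalence then absorbs $\bbmu$ rather than a constant of integration. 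Thus the cross-product shapes in (\ref{Sep22})--(\ref{Sep35}) drop out from a purely algebraic computation, and the delicate $k^1$-splitting you flag as the main obstacle never arises. Your plan would still work, but it is noticeably heavier than what is actually needed.
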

Proof of Theorem \ref{T1} can be found in \ref{AppA}. Each of the systems (\ref{Sep11})--(\ref{Sep35}) is a nonlinear overdetermined system of differential equations with sought functions $\bsigma$, and $\btau$. Each system possesses its own group of equivalence transformations. The group will be utilized to discard the inessential arbitrariness of arbitrary functions and vectors enumerated in the formulation of Theorem \ref{T1}.

In the following two sections the investigation of compatibility conditions and integration of each of obtained in Theorem \ref{T1} systems of equations is performed. The equivalence transformations of each system are used in order to bring its solution to a certain canonical form. The result of computations is formulated in Theorem \ref{T2}. The general standard form of the solution is given in Theorem \ref{T3}.

\section{Integration of equations for $s=0$ and $s=1$} The analysis is started from system (\ref{Sep11}). Integration on $k^1$ of the second equation of (\ref{Sep11}) gives
\[\bsigma=k^1\bnu(k^3)+\bdelta(k^3).\]
Substitution of this result into the first equation of (\ref{Sep11}) leads to
\[k^1\bnu'(k^3)\times\bnu(k^3)+\bdelta'(k^3)\times\bnu(k^3)=0.\]
As before, prime denotes the derivative with respect to the only argument $k^3$. Splitting with respect to the independent variable $k^1$ and integration of the subsequent equations produce
\[\bnu=\alpha(k^3)\bfeta,\quad \bdelta=\beta(k^3)\bnu,\quad\bfeta=\const.\]
with an arbitrary function $\alpha(k^3)$ and a constant vector $\bfeta$. By the action of transformations (\ref{SBMHD:BT}) and (\ref{SBMHD:IT}) it is always possible to fix constants as $\alpha=1$, $\beta=0$. Thus, vector $\bsigma$ has the form
\begin{equation}\label{SBMHD:20}
\bsigma=k^1\bfeta.
\end{equation}
The last of equations (\ref{Sep11}) after the substitution of $\sigma$ from (\ref{SBMHD:20}) can be written as follows:
\begin{equation}\label{SBMHD:21}
\bfeta\cdot\bigl(\btau_2\times\btau_3\bigr)=1.
\end{equation}
The obtained solution coincides with (\ref{SBMHD:s0}).

For $s=1$ equations (\ref{Sep21}) are considered. From the first of equations (\ref{Sep21}) it follows that vectors $\bsigma_3$ and $\bsigma_1$ are proportional which, in turn, imply
\[\bsigma=\bPhi(\lambda),\quad\lambda=\lambda(k^1,k^3),\quad \pd{\lambda}{k^1}\ne0.\]
Integration of the second of equations (\ref{Sep21}) on $k^1$ reduces the problem of determination of the vector field $\bsigma(k^1,k^3)$ to the solution of the following functional equation
\begin{equation}\label{SBMHD:29}
\bPhi\bigl(\lambda(k^1,k^3)\bigr)=u(k^1,k^3)\bepsilon(k^3)+k^1\bkappa(k^3)+\bdelta(k^3).
\end{equation}
Here $\partial u/\partial k^1=U$. It is required to find all possible functions $\lambda(k^1,k^3)$, and vector fields $\bPhi(\lambda)$, such that equation (\ref{SBMHD:29}) is satisfied identically on variables $k^1$ and $k^3$ for some functions $u(k^1,k^3)$, $\partial^2u/{\partial k^1}^2\ne0$ and vector fields $\bepsilon(k^3)$, $\bkappa(k^3)$, $\bdelta(k^3)$.

Differentiation of the equality (\ref{SBMHD:29}) with respect to $k^1$ yields
\[\bPhi'(\lambda)\lambda_1=u_1\bepsilon+\bkappa.\]
Here, as usual, the lower index denotes the derivative with respect to $k^1$.
Vector multiplication of this equations on $\bepsilon$, and subsequent dot multiplication on $\bkappa$ gives the following classifying equation:
\[\lambda_1(\bPhi'\times\bepsilon)\cdot\bkappa=0.\]
By definition, $\lambda_1\ne0$, hence the mixed product is equal to zero:
\[\bPhi'\cdot(\bepsilon\times\bkappa)=0.\]
Note, that vectors $\bPhi'$ and $\bepsilon\times\bkappa$ depend on different sets of variables, which allows application of a modification of Lemma \ref{SBMHD:l2} for the separation of variables in this equation. Due to the linear independence of vectors $\bepsilon$ and $\bkappa$ the following cases are possible:
\begin{enumerate}
\renewcommand{\theenumi}{(\alph{enumi})}
\item $\bPhi(\lambda)=f(\lambda)\be^1$, at that $\be^1\cdot(\bepsilon\times\bkappa)=0$;\label{SBMHD:c1}
\item $\bPhi(\lambda)=f(\lambda)\be^1+g(\lambda)\be^2$, at that $\be^1\cdot(\bepsilon\times\bkappa)=0$, $\be^2\cdot(\bepsilon\times\bkappa)=0$.\label{SBMHD:c2}
\end{enumerate}
Here $\be^1$ and $\be^2$ are orthonormal constant vectors.

In case \ref{SBMHD:c1} one has
\[\bkappa=v(k^3)\be^1+w(k^3)\bepsilon.\]
Substitution of obtained representations into equation (\ref{SBMHD:29}) yields
\[
f(\lambda)\be^1=\bigl(u(k^1,k^3)+k^1w(k^3)\bigr)\bepsilon(k^3)+k^1v(k^3)\be^1+\bdelta(k^3).
\]
Vector multiplication of this equation to $\be^1$ and splitting on $k^1$ gives $\bepsilon\times\be^1=0$, $\bdelta\times\be^1=0$. Hence, vectors $\bepsilon$ and $\bdelta$ are also proportional to vector $\be^1$. Substitution of vectors $\bkappa$ and $\bepsilon$ into the second of equations (\ref{Sep22}) leads to the contradiction.

Now, case \ref{SBMHD:c2} is observed. Here
\[\bkappa=v(k^3)\be^1+w(k^3)\be^2,\quad\bepsilon=m(k^3)\be^1+n(k^3)\be^2.\]
Form equation (\ref{SBMHD:29}) it follows that vector $\bdelta$ is in fact a linear combination of vectors $\be^1$ and $\be^2$. Thus, the equation follows:
\begin{equation}\label{SBMHD:31}
\bsigma=f(\lambda)\be^1+g(\lambda)\be^2.
\end{equation}
Function $\lambda(k^1,k^3)$ satisfies an implicit equation obtained by the elimination of function $u(k^1,k^3)$ from the following two relations:
\begin{equation}\label{SBMHD:32}
\begin{array}{l}
f(\lambda)=u(k^1,k^3)m(k^3)+k^1v(k^3)+\alpha(k^3),\\[2mm]
g(\lambda)=u(k^1,k^3)n(k^3)+k^1w(k^3)+\beta(k^3).
\end{array}
\end{equation}
Here $m$, $n$, $v$, $w$, $\alpha$, $\beta$ are arbitrary functions of $k^3$ such that
\begin{equation}\label{SBMHD:37}
\Delta=nv-mw\ne0.
\end{equation}

Vector $\bsigma$ determined by equation (\ref{SBMHD:31}), (\ref{SBMHD:32}) has the form
\[\bsigma_1=u(k^1,k^3)\bigl(m(k^3)\be^1+n(k^3)\be^2\bigr)+v(k^3)\be^1+w(k^3)\be^2.\]
Substitution into equation $\bsigma_1\cdot(\btau_2\times\btau_3)=1$ and splitting with respect to $k^1$ lead to the equation
\begin{equation}\label{SBMHD:35}
\begin{array}{l}
(m\be^1+n\be^2)\cdot(\btau_2\times\btau_3)=0,\\[2mm]
(v\be^1+w\be^2)\cdot(\btau_2\times\btau_3)=1.
\end{array}
\end{equation}
Vector $\btau$ has the following components in the orthonormal basis of vectors $\be^1$, $\be^2$ and $\be^3$:
\[\btau=(\tau^1,\tau^2,\tau^3)^{\mbox{T}}.\]
Equations (\ref{SBMHD:35}) can be written as follows:
\begin{equation}\label{SBMHD:36}
\left|
\begin{array}{ccc}
m&n&0\\
\tau^1_2&\tau^2_2&\tau^3_2\\
\tau^1_3&\tau^2_3&\tau^3_3
\end{array}
\right|=0,\quad
\left|
\begin{array}{ccc}
v&w&0\\
\tau^1_2&\tau^2_2&\tau^3_2\\
\tau^1_3&\tau^2_3&\tau^3_3
\end{array}
\right|=1
\end{equation}
As before, lower indices denote derivatives on variables $k^2$ and $k^3$. By virtue of inequality (\ref{SBMHD:37}), equations (\ref{SBMHD:36}) can be represented in the form of two linear equations for functions $\tau^1(k^2,k^3)$ and $\tau^2(k^2,k^3)$ provided function $\tau^3(k^2,k^3)$ is arbitrarily fixed:
\begin{equation}\label{SBMHD:38}
\left|
\begin{array}{cc}
\tau^1_2&\tau^3_2\\
\tau^1_3&\tau^3_3
\end{array}
\right|=\frac{m}{\Delta},\quad
\left|
\begin{array}{cc}
\tau^2_2&\tau^3_2\\
\tau^2_3&\tau^3_3
\end{array}
\right|=\frac{n}{\Delta},\quad\Delta=nv-mw.
\end{equation}
Finally, the solution takes the form
\begin{equation}\label{SBMHD:s2}
\bx=\bigl(f(\lambda)+\tau^1\bigr)\be^1+\bigl(g(\lambda)+\tau^2\bigr)\be^2+\tau^3\be^3,
\end{equation}
where $f(\lambda)$ and $g(\lambda)$ are arbitrary functions; $\lambda(k^1,k^3)$ is determined by the implicit equation that is obtained by the elimination of function $u(k^1,k^3)$ in relations (\ref{SBMHD:32}); $\tau^1(k^2,k^3)$ and $\tau^2(k^2,k^3)$ are solutions of linear equations (\ref{SBMHD:38}) with arbitrarily fixed function $\tau^3(k^2,k^3)$ and arbitrary functions $m(k^3)$, $n(k^3)$, $v(k^3)$, $w(k^3)$, taken form (\ref{SBMHD:32}).
This solution is simplified below by the action of equivalence transformations.

Let $m=0$. By using the equivalence transformations (\ref{SBMHD:BT}) and (\ref{SBMHD:IT}) one can make $v=1$, $\alpha=0$ in the first of equations (\ref{SBMHD:32}). According to the obtained equation, by virtue of the arbitrariness of function $f$, one can assume $\lambda=k^1$. At that, the second of equations (\ref{SBMHD:32}) does not lead to additional restrictions to functions $g$ and $\lambda$ due to the arbitrariness of function $u$. One can assume the right-hand side of equation (\ref{SBMHD:32}) to be a definition of a new function $u(k^1,k^3)$, i.e. $n=1$, $w=\beta=0$. At that, from the first equation of (\ref{SBMHD:38}) it follows that functions $\tau^1$ and $\tau^3$ are functionally dependent. The solution acquires the following canonical form:
\begin{equation}\label{SBMHD:s4}
\fl\bx=\left(k^1+F\bigl(\tau^3(k^1,k^3)\bigr)\right)\be^{1}+\bigl(g(k^1)+\tau^2(k^2,k^3)\bigr)\be^{2}+\tau^3(k^2,k^3)\be^{3},\quad
\left|
\begin{array}{cc}
\tau^2_2&\tau^3_2\\
\tau^2_3&\tau^3_3
\end{array}
\right|=1.
\end{equation}
Here $F$, and $g$ are arbitrary functions.

In case $m\neq0$ by using the arbitrariness of function $u(k^1,k^3)$, the right-hand side of the first of equations (\ref{SBMHD:32}) can be taken as a new function $u$. Then $m=1$, $v=\alpha=0$. Functions $n$, $w$ and $\beta$ in the second of equations (\ref{SBMHD:32}) are changed correspondingly. Next, by using the equivalence transformations (\ref{SBMHD:BT}) and (\ref{SBMHD:IT}) one makes $w=1$, $\beta=0$. Thus,
\numparts
\begin{eqnarray}\label{SBMHD:d61}
&&g(\lambda)=f(\lambda)n(k^3)+k^1,\\[3mm]
&&\displaystyle \left|
\begin{array}{cc}
\tau^1_2&\tau^3_2\\
\tau^1_3&\tau^3_3
\end{array}
\right|=-1,\quad
\left|
\begin{array}{cc}
\tau^2_2&\tau^3_2\\
\tau^2_3&\tau^3_3
\end{array}
\right|=-n(k^3).\label{SBMHD:d62}
\end{eqnarray}
\endnumparts
Finally, the solution takes the canonical form as
\begin{equation}\label{SBMHD:s41}
\bx=(f(\lambda)+\tau^1(k^2,k^3))\be^{1}+(g(\lambda)+\tau^2(k^2,k^3))\be^{2}+\tau^3(k^2,k^3)\be^{3}.
\end{equation}
Here the relation (\ref{SBMHD:d61}) for arbitrary choice of functions $f$, $g$ and $n$ determines function $\lambda(k^1,k^3)$. Functions $\tau^1$, $\tau^2$, and $\tau^3$ should satisfy equations (\ref{SBMHD:d62}).

Hence, equations (\ref{Sep21}) are completely integrated. Solutions are given by formulae (\ref{SBMHD:s4}) and (\ref{SBMHD:s41}).

For $s=1$ equations (\ref{Sep22}) are considered. Integration of the second of the equations (\ref{Sep22}) leads to the following representation of vector $\bsigma$:
\begin{equation}\label{SBMHD:12a1}
\bsigma=u(k^1,k^3)\bepsilon(k^3)+\bdelta(k^3).
\end{equation}
Here $U=\partial u/\partial k^1$. Differentiation of equation (\ref{SBMHD:12a1}) on variables $k^1$, $k^3$ and substitution into the first of equations (\ref{Sep22}) lead to the relation
\begin{equation}\label{SBMHD:12a2}
\bigl(uu_1\,\bepsilon'+u_1\,\bdelta'-\bnu\bigr)\times\bepsilon=0.
\end{equation}
In order to separate variables in equation (\ref{SBMHD:12a2}), cases of different space of the dimension of the linear space spanned by the 3D vector $(uu_1,u_1,1)$ for various values of  $k^1$ should be investigated. By virtue of condition $u_{11}\ne0$ the last two components of the vector are always linearly independent, hence the dimension can be either 2 or 3.

i) $\dim\{(uu_1,u_1,1)\}=3$. In this case all components are linearly independent, which allows the separation of variables in equation (\ref{SBMHD:12a2}):
\[\bepsilon'\times\bepsilon=0,\quad\bdelta'\times\bepsilon=0,\quad\bnu\times\bepsilon=0.\]
By the integration one finds
\begin{equation}\label{SBMHD:12a13}
\bepsilon=\varepsilon(k^3)\bfeta,\quad\bdelta=\delta(k^3)\bfeta,\quad\bnu=\nu(k^3)\bfeta,\quad\bfeta=\const.
\end{equation}
Substitution of representations (\ref{SBMHD:12a13}) into the second series of equations (\ref{Sep22}) leads to the contradiction.

ii) $\dim\{(uu_1,u_1,1)\}=2$. In this case there exists a linear dependence between components of vector $(uu_1,u_1,1)$. Due to the linear independence of components $u_1$ and $1$ the linear relation can be chosen as
\[uu_1+a(k^3) u_1+b(k^3)=0.\]
By integration one finds
\[\frac{1}{2}\,u^2+a(k^3) u+b(k^3) k^1=c(k^3).\]
Function $u$ is determined accurate to an arbitrary additive that depends on $k^3$. This arbitrariness allows one to zero function $a$. Next, the equivalence transformations (\ref{SBMHD:BT}) and (\ref{SBMHD:IT}) allows to normalize parameters as $b=-1/2$, $c=0$. Thus, in this case function $u$ is equivalent to the following one:
\begin{equation}\label{SBMHD:12a14}
u=\pm\sqrt{|k^1|}.
\end{equation}
Separation of variables in equation (\ref{SBMHD:12a2}) leads to the relation
\[\left(\frac{1}{2}\,\bepsilon'-\bnu\right)\times\bepsilon=0,\quad\bdelta'\times\bepsilon=0.\]
From this, accurate to insufficient additive, one obtains
\begin{equation}\label{SBMHD:12a15}
\bnu=\frac{1}{2}\,\bepsilon',\quad \bdelta'=\alpha(k^3)\bepsilon.
\end{equation}
Under this condition the second series of equations (\ref{Sep22}) reads
\begin{equation}\label{SBMHD:12a16}
\bepsilon'\cdot\bigl(\bepsilon\times\btau_2\bigr)=2,\quad\btau_3\cdot\bigl(\bepsilon\times\btau_2\bigr)=0.
\end{equation}
By virtue of the first of equations (\ref{SBMHD:12a16}) vectors $\bepsilon$ and $\btau_2$ are linearly independent, whereas due to the second of this equations vector $\btau_3$ can be linearly expressed in terms of $\btau_2$ and $\bepsilon$. This fixes vector $\btau$ in the form
\begin{equation}\label{SBMHD:12a17}
\btau=\bPhi(\lambda)+\bomega(k^3),\quad\lambda=k^2+a(k^3),\quad\bomega'=\beta(k^3)\bepsilon
\end{equation}
with arbitrary vector $\bPhi$ and arbitrary functions $a$ and $\beta$. From the first of equations (\ref{SBMHD:12a16}) it follows the relation
\begin{equation}\label{SBMHD:12a18}
\bepsilon'\cdot\bigl(\bepsilon\times\bPhi'(\lambda)\bigr)=2.
\end{equation}
Independence of vector $\bepsilon$ on variable $k^2$ implies
\[\bPhi=\lambda\bfeta+\bzeta,\quad\bfeta,\;\bzeta=\const.\]
Substitution into equation (\ref{SBMHD:12a18}) yields
\begin{equation}\label{SBMHD:12a19}
\bepsilon'\cdot\bigl(\bepsilon\times\bfeta\bigr)=2.
\end{equation}
Differentiation of equation (\ref{SBMHD:12a19}) on $k^3$ gives
\[\bepsilon''\cdot\bigl(\bepsilon\times\bfeta\bigr)=0\]
this implies that vector $\bepsilon''$ can be expressed linearly in terms of $\bepsilon$ and $\bfeta$. Hence,
\begin{equation}\label{SBMHD:12a110}
\bepsilon=\rme^{\gamma k^3}\bfeta^1+\rme^{-\gamma k^3}\bfeta^2+C\bfeta
\end{equation}
with arbitrary vectors $\gamma$, $C$ and arbitrary constant vectors $\bfeta^1$ and $\bfeta^2$. Substitution of representation (\ref{SBMHD:12a110}) into equation (\ref{SBMHD:12a19}) gives a restriction on vectors $\bfeta^1$ and $\bfeta^2$:
\begin{equation}\label{SBMHD:12a111}
\gamma\bfeta\cdot\bigl(\bfeta^1\times\bfeta^2\bigr)=1.
\end{equation}
Finally, accurate to insufficient terms, one obtains the solution
\begin{equation}\label{SBMHD:s11}
\begin{array}{l}
\displaystyle\bx=\pm\sqrt{|k^1|}\left(\rme^{\gamma k^3}\bfeta^1+\rme^{-\gamma k^3}\bfeta^2+C\bfeta\right)+k^2\bfeta+\\[2mm]
\displaystyle\qquad+\int\alpha(k^3)\rme^{\gamma k^3}\rmd k^3\bfeta^1+
\int\alpha(k^3)\rme^{-\gamma k^3}\rmd k^3\bfeta^2+C\int\alpha(k^3)\rmd k^3\bfeta.
\end{array}
\end{equation}
Here $\gamma$ and $C$ are arbitrary constant such that $\gamma$ may take complex values; in this case vectors $\eta^i$ should be chosen such that the whole solution is real. Function $\alpha(k^3)$ is arbitrary; constant vectors $\bfeta$, $\bfeta^1$, and $\bfeta^2$ satisfy equations (\ref{SBMHD:12a111}). Thus, the canonical form of the solution of equations (\ref{Sep23}) in the observed case is given by equation (\ref{SBMHD:s11}).

Finally, for $s=1$ equations (\ref{Sep23}) are observed. By virtue of the second of equations (\ref{Sep23}) vector $\bsigma$ depends linearly on $k^1$ and, hence, the solution is reduced to either of solutions (\ref{SBMHD:s0}) or (\ref{SBMHD:slin}). This remark completes the investigation of case $s=1$.

\section{Integration of equations in case $s=2$}
Equations (\ref{Sep31}) are now observed. By virtue of the arbitrariness in the choice of linearly independent functions $U^1$ and $U^2$ this system admits the following equivalence transformations:
\begin{eqnarray}\label{SBMHD:47}
&&\begin{array}{cc}
\tbepsilon^1=A^1(k^3)\bepsilon^1+A^2(k^3)\bepsilon^2,\\[2mm]
\tbepsilon^2=B^1(k^3)\bepsilon^1+B^2(k^3)\bepsilon^2,
\end{array}\quad A^1B^2\ne A^2B^1,
\\[1mm]\nonumber
&&\begin{array}{l}
\tbkappa=\bkappa+\omega^1(k^3)\bepsilon^1+\omega^2(k^3)\bepsilon^2.
\end{array}
\end{eqnarray}
By the action of transformations (\ref{SBMHD:47}) one can bring vectors $\bepsilon^{1}$, $\bepsilon^{2}$ to the following form:
\begin{equation}\label{SBMHD:48}
\bepsilon^{1}=\be^{1}+\alpha(k^{3})\be^{3},\;\;
\bepsilon^{2}=\be^{2}+\beta(k^{3})\be^{3},\;\;\bkappa=\be^{3},
\end{equation}
where $\alpha$, $\beta$ are arbitrary functions.

Integration of the second equation of  (\ref{Sep31}) on $k^1$ leads to the following representation for vector $\bsigma$:
\begin{equation}\label{SBMHD:43}
\bsigma=u^1(k^1,k^3)\bepsilon^1(k^3)+u^2(k^1,k^3)\bepsilon^2(k^3)+k^1\bkappa(k^3).
\end{equation}
Here $U^i=\partial u^i/\partial k^1$. An arbitrary additive vector depending on $k^3$ that appears as a ``constant of integration" can be zeroed by the suitable change of functions $u^1$ and $u^2$, and by application of transformation (\ref{SBMHD:IT}). From representations (\ref{SBMHD:48}), (\ref{SBMHD:43}) for vector $\bsigma$ one finds
\[\bsigma=u^1(k^1,k^3)\be^{1}+u^2(k^1,k^3)\be^{2}+S(k^1,k^3)\be^{3},\quad S=\alpha u^1+\beta u^2+k^1.\]
This implies
\begin{equation}\label{SBMHD:52}
\bsigma_1=u^1_1\be^{1}+u^2_1\be^{2}+S_1\be^{3}, \quad \bsigma_3=u^1_3\be^{1}+u^2_3\be^{2}+S_3\be^{3}.
\end{equation}
Hence,
\[\bsigma_3\times\bsigma_1=\bigl(u^1_3u^2_1-u^1_1u^2_3\bigr)\be^{3}-\bigl(u^1_3S_1-u^1_1S_3\bigr)\be^{2}+ \bigl(u^2_3S_1-u^2_1S_3\bigr)\be^{1}=0\]
The latter equation is satisfied only if
\begin{equation}\label{SBMHD:49}
u^1_3u^2_1=u^1_1u^2_3, \quad u^1_3S_1=u^1_1S_3,\quad u^2_3S_1=u^2_1S_3.
\end{equation}
Equations (\ref{SBMHD:49}) imply the functional dependence
\begin{equation}\label{SBMHD:50}
u^1=u^1\bigl(\lambda(k^1,k^3)\bigr),\quad u^2=u^2\bigl(\lambda(k^1,k^3)\bigr),\quad S=S\bigl(\lambda(k^1,k^3)\bigr).
\end{equation}
Function $\lambda(k^1,k^3)$ can be found from the following equation:
\begin{equation}\label{SBMHD:51}
\alpha u^1(\lambda)+\beta u^2(\lambda)+k^1=S(\lambda)
\end{equation}
For some functions $\alpha(k^3)$, $\beta(k^3)$, $u^1(\lambda)$, $u^2(\lambda)$ and $S(\lambda)$.

Finally, vector $\bsigma$ has the form defined by the equation (\ref{SBMHD:43}) with functions $u^1$, $u^2$, given by relations (\ref{SBMHD:50}), (\ref{SBMHD:51}) and vectors $\bepsilon^1$, $\bepsilon^2$, $\bkappa$, satisfying equations (\ref{SBMHD:48}) with arbitrary functions $\alpha$, $\beta$.

Equations for vector $\btau(k^2,k^3)$ remain. After substitution of obtained representation for vector $\bsigma$ in equation (\ref{SBMHD:17}) and splitting with respect to linearly independent functions $u^1$ and $u^2$ one obtains the following equations:
\begin{equation}\label{SBMHD:52a}
(\be^{1}+\alpha\be^{3})\cdot(\btau_2\times\btau_3)=0,\quad (\be^{2}+\beta\be^{3})\cdot(\btau_2\times\btau_3)=0, \quad\be^{3}\cdot(\btau_2\times\btau_3)=1.
\end{equation}

Vector $\btau$ can be written in the basis of linearly independent vectors $\bepsilon^1$, $\bepsilon^2$, $\bkappa$ as follows:
\begin{equation}\label{SBMHD:53}
\btau=\tau^1(\be^{1}+\alpha\be^{3})+\tau^2(\be^{2}+\beta\be^{3})+\tau^3\be^{3}.
\end{equation}
Here functions $\tau^i$ depends on $k^2$ and $k^3$. Substitution of representation (\ref{SBMHD:53}) in equation (\ref{SBMHD:52a}) gives the following relations:
\begin{eqnarray}\label{SBMHD:54a}
&&\left|\begin{array}{ll}
\tau^2_2 & \tau^3_2\\[1mm]
\tau^2_3 & \tau^3_3+\tau^1\alpha'+\tau^2\beta'
\end{array}\right|=0,\\[4mm]\label{SBMHD:54b}
&&\left|\begin{array}{ll}
\tau^1_2 & \tau^3_2\\[1mm]
\tau^1_3 & \tau^3_3+\tau^1\alpha'+\tau^2\beta'
\end{array}\right|=0,\\[4mm]\label{SBMHD:54c}
&&\left|\begin{array}{ll}
\tau^1_2 & \tau^2_2\\[1mm]
\tau^1_3 & \tau^2_3
\end{array}\right|=1.
\end{eqnarray}
By virtue of equation (\ref{SBMHD:54c}), equations (\ref{SBMHD:54a}) and (\ref{SBMHD:54b}) have non-trivial solutions only if
\begin{equation}\label{SBMHD:55}
\tau^3_2=0,\quad\tau^3_3+\tau^1\alpha'+\tau^2\beta'=0.
\end{equation}
As a consequence of equation (\ref{SBMHD:55}), function $\tau^3$ in representation (\ref{SBMHD:53}) depends only on $k^3$. For simplicity the notation $\tau^3=\gamma(k^3)$ is adopted. In the second equation of (\ref{SBMHD:55}) one distinguishes the following cases
\begin{enumerate}
\item $\alpha'=\beta'=0$;
\item $(\alpha')^2+(\beta')^2\ne0$;
\end{enumerate}
In the first case $\gamma'=0$, and functions $\tau^2$, $\tau^3$ are such that equation (\ref{SBMHD:54c}) is satisfied. This leads to a solution
\begin{equation}\label{solve:21}
\bx=\bigl(u^{1}(k^{1})+\tau^{1}(k^2,k^{3})\bigr)\be^1+\bigl(u^{2}(k^{1})+\tau^{2}(k^2,k^{3})\bigr)\be^2+k^1\be^{3},
\end{equation}
where $u^{1}$, $u^{2}$ are arbitrary functions; $\tau^1$, $\tau^2$ are arbitrary functions satisfying equation
\[
\left|\begin{array}{ll}
\tau^1_2    &  \tau^2_2\\[2mm]
\tau^1_3    & \tau^2_3
\end{array}\right|=1.
\]

In the second case by virtue of equation (\ref{SBMHD:55}) one obtains the representation
\begin{equation}\label{taurep}
\fl\tau^1=\beta'\varphi(k^2,k^3)-\frac{\alpha'\gamma'}{(\alpha')^2+(\beta')^2},\quad
\tau^2=-\alpha'\varphi(k^2,k^3)-\frac{\beta'\gamma'}{(\alpha')^2+(\beta')^2}.
\end{equation}
Substitution of this representation into equation (\ref{SBMHD:54c}) gives the following equation for function $\varphi$:
\[\fl\varphi\,\pd{\varphi}{k^2}(\alpha'\beta''-\beta'\alpha'')-
\pd{\varphi}{k^2}\left(\alpha'\left(\frac{\alpha'\gamma'}{(\alpha')^2+(\beta')^2}\right)'
+\beta'\left(\frac{\beta'\gamma'}{(\alpha')^2+(\beta')^2}\right)'\right)=1.\]
Integration of this representation on $k^2$ and simplification of the second term leads to the quadratic equation for function $\varphi$:
\begin{equation}\label{phieq}
\frac{\alpha'\beta''-\beta'\alpha''}{2}\,\varphi^2-\left(\gamma ''-\frac{\gamma '\left(\alpha '\alpha ''+\beta '\beta''\right)}{(\alpha ')^2+(\beta ')^2}\right)\varphi=k^2.
\end{equation}

The final form of the solution is
\begin{equation}\label{SBMHD:s3}
\bx=(u^1(\lambda)+\tau^1)\be^{1}+(u^2(\lambda)+\tau^2)\be^{2}+\bigl(S(\lambda)+\alpha\tau^1+\beta\tau^2+\gamma\bigr)\be^{3}.
\end{equation}
Here $\alpha(k^3)$, $\beta(k^3)$, $\gamma(k^3)$ are arbitrary functions. Functions $u^1(\lambda)$, $u^2(\lambda)$, $S(\lambda)$ are also arbitrary and linearly independent; function $\lambda(k^1,k^3)$ is given by the implicit equation (\ref{SBMHD:51}). Functions $\tau^1(k^2,k^3)$ and $\tau^2(k^2,k^3)$ is specified by relation (\ref{taurep}), in which function $\varphi$ is determined by the condition (\ref{phieq}).

For $s=2$ in the case (\ref{Sep32}) the system of equation admits the following equivalence transformations:
\begin{equation}\label{SBMHD:d4}
\begin{array}{c}
\overline{\bepsilon^{1}}=A^{1}(k^{3})\bepsilon^{1}+A^{2}(k^{3})\bepsilon^{2},\\[2mm]
\overline{\bepsilon^{2}}=B^{1}(k^{3})\bepsilon^{1}+B^{2}(k^{3})\bepsilon^{2},
\end{array}\quad A^1B^2\ne A^2B^1.
\end{equation}
In the orthonormal basis of vectors $\be^{1}$, $\be^{2}$, $\be^{3}$, vectors $\bepsilon^{1}$, $\bepsilon^{2}$ and $\btau$ have the following components:
\[
\bepsilon^{1}=(\varepsilon^{11},\varepsilon^{12},\varepsilon^{13})^{T},\quad
\bepsilon^{2}=(\varepsilon^{21},\varepsilon^{22},\varepsilon^{23})^{T},\quad
\btau=(\tau^{1},\tau^{2},\tau^{3})^{T}.
\]
By the action of  equivalence transformations (\ref{SBMHD:d4}) vectors $\bepsilon^{1}$ and $\bepsilon^{2}$ can be brought to a form
\begin{equation}\label{SBMHD:ss5}
\bepsilon^{1}=\be^{1}+\alpha(k^{3})\be^{3},\quad
\bepsilon^{2}=\be^{2}+\beta(k^{3})\be^{3}
\end{equation}
where $\alpha$ and $\beta$ are some arbitrary functions.

First, equations for vector $\bsigma$ are observed. Substitution of relation (\ref{SBMHD:ss5}) into the second of equations (\ref{Sep32}) gives
\begin{equation}\label{SBMHD:s9}
\bsigma_{1}=u^{1}_{1}\be^{1}+u^{2}_{1}\be^{2}+(\alpha u_{1}^{1}+\beta u^{2}_{1})\be^{3}
\end{equation}
Integration of this equation on $k^{1}$ and differentiation on $k^{3}$ produces
\begin{equation}\label{SBMHD:s10}
\bsigma_{3}=u^{1}_{3}\be^{1}+u^{2}_{3}\be^{2}+(\alpha' u^{1}+\alpha u^{1}_{3}+\beta' u^{2}+\beta u^{2}_{3}+\gamma')\be^{3}
\end{equation}
Vector product of these formulas, collection of coefficients at basic vectors, and equating of corresponding coefficients in the right-hand side leads to the following system of equations:
\begin{equation}\label{SBMHD:s111}
\fl u^{2}_{1}u_{3}^{1}-u_{1}^{1}u_{3}^{2}=1,\quad
u_{1}^{1}(\alpha'u^{1}+\beta'u^{2}+\gamma')=0,\quad
u^{2}_{1}(\alpha'u^{1}+\beta'u^{2}+\gamma')=0.
\end{equation}
Two cases are possible. Case a) $u_{1}^{1}=u_{1}^{2}=0$ contradicts to the linear independence of functions $U^1=u_{1}^{1}$ and $U^2=u_{1}^{2}$. In case b) $\alpha'u^{1}+\beta'u^{2}+\gamma'=0$ by virtue of linear independence of functions $U^{1}$ and $U^{2}$ one obtains $\alpha'(k^{3})=\beta'(k^{3})=\gamma'(k^{3})=0$. Then, functions $\alpha(k^{3})$, $\beta(k^{3})$ and $\gamma(k^{3})$ are constants, that can be taken zero without limiting the generality.

Equations for $\btau$ are now observed. Substitution of representations (\ref{SBMHD:ss5}) into equation for $\btau$ gives the following relations
\begin{eqnarray}\label{SBMHD:s5}
\begin{array}{cc}\label{SBMHD:s6}
\gamma(k^{3})\tau_{2}^{3}=1,
\end{array}
\\[4mm]
\left|\begin{array}{cc}\label{SBMHD:s7}
\tau_{2}^{2}&\tau^{3}_{2}\\
\tau_{2}^{3}&\tau^{3}_{3}
\end{array}
\right|=0,\quad
\left|\begin{array}{cc}
\tau_{2}^{1}&\tau^{1}_{3}\\
\tau_{2}^{3}&\tau^{3}_{3}
\end{array}
\right|=0.
\end{eqnarray}
Equations (\ref{SBMHD:s7}) imply
\[
\btau=\btau\bigl(\lambda(k^2,k^3)\bigr).
\]
By integration of equations (\ref{SBMHD:s6}) on $k^{2}$ one obtains
\[
\gamma(k^{3})\tau^{3}=k^{2}
\]
The arbitrary constant of integration appearing at the integration can be zeroed by the equivalence transformation (\ref{SBMHD:IT}). By applying transformation (\ref{SBMHD:BT}) one can make $\gamma(k^{3})=1$. By virtue of the equation for function $\tau^3$ one can assume $\lambda=k^2$.

The solution takes the following form:
\[
\bx=\bigl(u^1(k^1,k^3)+\tau^1(k^2)\bigr)\be^{1}+\bigl(u^2(k^1,k^3)+\tau^2(k^2)\bigr)\be^{2}+k^2\be^{3},
\]
where functions $u^1(k^1,k^3)$, $u^2(k^1,k^3)$ satisfy the relation $u^{2}_{1}u_{3}^{1}-u_{1}^{1}u_{3}^{2}=1$; function $\tau^1(k^2)$, $\tau^2(k^2)$ are arbitrary. This solution coincide with solution (\ref{solve:21}).

In case of equations (\ref{Sep33}) the following equivalence transformations are admitted:
\[
\overline{\bkappa}=A(k^{3})\bkappa+B(k^{3})\bepsilon,\quad\overline{\bepsilon}=C(k^{3})\bepsilon,\quad A,C\ne0.
\]
Under this transformation vectors $\bepsilon$, $\bkappa$ and $\btau$ can be represented in the orthonormal basis of vectors $\be^{i}$ as
\begin{eqnarray}\label{SBMHD:s12}
\bepsilon=\be^{1}+\alpha^{2}(k^{3})\be^{2}+\alpha^{3}(k^{3})\be^{3},\quad
\bkappa=\be^{2}+\beta\be^{3},\quad
\btau=(\tau^{1},\tau^{2},\tau^{3})^{T}.
\end{eqnarray}

In the same manner equations for vector $\bsigma$ give
\begin{equation}\label{f:2}
\fl\begin{array}{l}
u^{1}_{1}u^{1}(\alpha^{2})'-u_{3}^{1}=u^{2},\\[2mm]
u_{1}^{1}u^{1}(\alpha^{3})'+u_{1}^{1}k^{1}\beta'-u_{3}^{1}\beta=u^{2}\beta,\\[2mm]
u_{3}^{1}\alpha^{3}+(u_{1}^{1}\alpha^{2}+1)(u^{1}(\alpha^{3})'-k^{1}\beta')- u^{1}_{3}\alpha^{2}\beta-u^{1}(\alpha^{2})'(u_{1}^{1}\alpha^{3}+\beta)=(\alpha^{2}\beta-\alpha^{3})u^{2}.
\end{array}
\end{equation}
Two former equations of this system imply
\[
u_{1}^{1}\bigl(u^{1}((\alpha^{2})'\beta-(\alpha^{3})')-k^{1}\beta'\bigr)=0.
\]
This equation is satisfied either if $u^{1}_{1}=0$, or if $u^{1}\bigl((\alpha^{2})'\beta-(\alpha^{3})'\bigr)-k^{1}\beta'=0$. First case contradicts to the linear independence of functions $U^1$ and $U^2$. The second case leads to equations
\[
(\alpha^{2})'\beta-(\alpha^{3})'=0,\;\;\beta'=0.
\]
Hence, $\beta=\beta_{0}=\const$, $\alpha^{2}\beta_{0}-\alpha^{3}=m=\const$. The third of equations (\ref{f:2}) is satisfied identically.

Now equations for vector $\btau$ are observed. Substitution of representations (\ref{SBMHD:s12}) into equations (\ref{Sep33}) gives
\begin{eqnarray}\label{SBMHD:ss1}
\begin{array}{cc}
\tau_{2}^{1}m-\tau^{2}_{2}\beta_{0}+\tau^{3}_{2}=0,\label{SBMHD:ss2}
\end{array}
\\[4mm]
\left|\begin{array}{cc}\label{SBMHD:ss3}
\tau_{2}^{2}&\tau^{3}_{2}\\
\tau_{2}^{3}&\tau^{3}_{3}
\end{array}
\right|-\alpha^{2}\left|
\begin{array}{cc}
\tau_{2}^{1}&\tau^{3}_{2}\\
\tau_{3}^{1}&\tau^{3}_{3}
\end{array}
\right|+\alpha^{3}\left|
\begin{array}{cc}
\tau_{2}^{1}&\tau^{2}_{2}\\
\tau_{3}^{1}&\tau^{2}_{3}
\end{array}
\right|=0,
\\[4mm]\label{SBMHD:ss4}
-\left|\begin{array}{cc}
\tau_{2}^{1}&\tau^{3}_{2}\\
\tau_{3}^{1}&\tau^{3}_{3}
\end{array}
\right|+\beta\left|
\begin{array}{cc}
\tau_{2}^{1}&\tau^{2}_{2}\\
\tau_{3}^{1}&\tau^{2}_{3}
\end{array}
\right|=1.
\end{eqnarray}
Form the first equation of the system it follows
\[
\tau^{3}=\delta(k^{3})+\beta_{0}\tau^{2}-m\tau^{1}.
\]
Substitution of $\tau^{3}$ into equation (\ref{SBMHD:ss4}) yields
\[
\tau^{1}=-\frac{k^{2}}{\delta'}+\nu(k^{3}).
\]
By plugging functions $\tau^{3}$, $\tau^{2}$ into equation (\ref{SBMHD:ss3}) one obtains
\[
\tau^{2}=\alpha^{2}\left(-\frac{k^{2}}{\delta'}+\nu\right)+\gamma(k^{3}).
\]
The final form of the solution is the following:
\begin{equation}\label{Sol35}
\fl\begin{array}{l}
\displaystyle\bx=\left(u^{1}+\nu-\frac{k^{2}}{\delta'}\right)\be^{1}+\left(k^{1}+u^{1}\alpha^{2}+
\gamma+\alpha^{2}\nu-\frac{k^{2}\alpha^{2}}{\delta'}\right)\be^{2}+\\[2mm]
\displaystyle\qquad+\left(k^{1}\beta_{0}+u^{1}(m+\beta_{0}\alpha^{2})+\beta_{0}\gamma+\delta+m\nu+
\beta_{0}\alpha^{2}\nu-\frac{k^{2}(m+\beta_{0}\alpha^{2})}{\delta'}\right)\be^{3}
\end{array}
\end{equation}
Here $u^1(k^1,k^3)$, $n(k^3)$, $\delta(k^3)$, $\alpha^2(k^3)$, $\gamma(k^3)$ are arbitrary functions; $m$, $\beta_0$ are arbitrary constants.

Now equations (\ref{Sep34}) are investigated. This system admits the following equivalence transformations:
\begin{equation}\label{SBMHD:sss5}
\begin{array}{l}
\overline{\bepsilon}=A(k^{3})\bepsilon,\quad\overline{\bnu}=B(k^3)\bnu+C(k^{3})\bmu+D(k^{3})\bepsilon,\\[2mm]
\overline{\bmu}=E(k^{3})\bmu+F(k^{3})\bepsilon,\quad A,B,E\ne0.
\end{array}
\end{equation}
Under the action of transformations (\ref{SBMHD:sss5}) vectors $\bepsilon$, $\bmu$ and $\bnu$ can be represented in the orthonormal basis $\be^{i}$ as follows
\begin{equation}\label{SBMHD:sss6}
\bepsilon=\be^{1}+\alpha^{2}(k^{3})\be^{2}+\alpha^{3}(k^{3})\be^{3},\quad
\bmu=\be^{2}+\beta(k^3)\be^{3},\quad
\bnu=\be^{3}.
\end{equation}

Let us observe equations for $\bsigma$. Substitution of representations $(\ref{SBMHD:sss6})$ into equation (\ref{Sep34}) and comparison of respective components gives the following equations
\[
\begin{array}{l}
\alpha^{3}\bigl(-u^{2}_{1}+u^{1}(\alpha^{2})'u_{1}^{1}\bigr)+\alpha^{2}\bigl(u^{2}_{1}\beta-u^{1}(\alpha^{3})'+1\bigr)=0,\\[2mm]
-u_{1}^{2}\beta+u^{1}(\alpha^{3})'u^{1}_{1}=1,\\[2mm]
u^{2}_{1}=u^{1}(\alpha^{2})'u^{1}_{1}.
\end{array}
\]
The first of equations is satisfied identically by virtue of the remaining two. Substitution of $u^{2}_{1}$ into the second equation of the system and integration with respect to $k^{1}$ accurate to insufficient constant of integration gives an equation for function $u^{1}$:
\[
(u^{1})^{2}\bigl(-\beta(\alpha^{2})'+(\alpha^{3})'\bigr)=2k^{1}.
\]
Hence,
\[
u^{1}=\pm\sqrt{\frac{2k^{1}}{(\alpha^{3})'-\beta(\alpha^{2})'}}
\]
Vector $\bsigma$ takes the form
\[
\displaystyle \bsigma=\pm\sqrt{\frac{2k^{1}}{(\alpha^{3})'-\beta(\alpha^{2})'}}\left(\be^{1}+\alpha^{2}(k^{3})\be^{2}+\alpha^{3}(k^{3})\be^{3}\right).
\]

Now we solve equations for vector $\btau$. Substitution of equations (\ref{SBMHD:sss6}) into (\ref{Sep34}) gives the following system of equations:
\begin{eqnarray}\label{SBMHD:sss8}
(\alpha^{3}-\alpha^{2}\beta)\tau_{2}^{1}+\beta\tau_{2}^{2}-\tau_{2}^{3}=0,\quad-\alpha^{2}\tau_{2}^{1}+\tau_{2}^{2}=1,
\\[4mm]
\left|\begin{array}{cc}
\tau^{2}_{2}&\tau^{3}_{2}\\
\tau^{2}_{3}&\tau^{3}_{3}
\end{array}
\right|-\alpha^{2}\left|
\begin{array}{cc}
\tau_{2}^{1}&\tau^{3}_{2}\\
\tau_{3}^{1}&\tau^{3}_{3}
\end{array}
\right|+\alpha^{3}\left|
\begin{array}{cc}
\tau_{2}^{1}&\tau^{2}_{2}\\
\tau_{3}^{1}&\tau^{2}_{3}
\end{array}
\right|=0.
\end{eqnarray}
Form the second equation of system (\ref{SBMHD:sss8}) by integration on $k^{2}$ accurate to an insufficient function one obtains
\[\tau^{2}=k^{2}+\alpha^{2}\tau^{1}.\]
Substitution of $\tau^{2}$ into the first equation of (\ref{SBMHD:sss8}) and integration on $k^{2}$ give
\[\tau^{3}=\beta k^{2}+\alpha^{3}\tau^{1}+\gamma.\]
Substitution of $\tau^{3}$ and $\tau^{2}$ into the first equation of the system and integration yield
\[
\tau^{1}=-\frac{k^{2}\beta'+\gamma'}{(\alpha^{3})'-\beta(\alpha^{2})'}.
\]
Finally, we obtain the solution in the form:
\begin{equation}\label{sol351}
\begin{array}{l}
\displaystyle\bx=\left(\pm\sqrt{\frac{2k^{1}}{(\alpha^{3})'-\beta(\alpha^{2})'}}-\frac{k^{2}\beta'+
\gamma'}{(\alpha^{3})'-\beta(\alpha^{2})'}\right)\be^{1}+\\[4mm]
\displaystyle\qquad+ \left(k^{2}+\alpha^{2}\left(\pm\sqrt{\frac{2k^{1}}{(\alpha^{3})'-\beta(\alpha^{2})'}}-
\frac{k^{2}\beta'+\gamma'}{(\alpha^{3})'-\beta(\alpha^{2})'}\right)\right)\be^{2}+\\[4mm]
\displaystyle\qquad\qquad+ \left(k^{2}\beta+\gamma+\alpha^{3}\left(\pm\sqrt{\frac{2k^{1}}{(\alpha^{3})'-\beta(\alpha^{2})'}}-
\frac{k^{2}\beta'+\gamma'}{(\alpha^{3})'-\beta(\alpha^{2})'}\right)\right)\be^{3}.
\end{array}
\end{equation}
Here $\alpha^2(k^{3})$, $\alpha^3(k^{3})$, $\beta(k^{3})$, $\gamma(k^{3})$ are arbitrary functions such that $(\alpha^{3})'\ne\beta(\alpha^{2})'$.

It is remain to observe the last case given by equations (\ref{Sep35}). Similar calculations shows, that this system does not have non-trivial solutions. Thus, investigation of the case $s=2$ is completed.

\section{The final list of solutions} Analysis of systems (\ref{Sep11})--(\ref{Sep35}) produces non-equivalent solutions given by the following formulae: (\ref{SBMHD:s0}), (\ref{SBMHD:slin}), (\ref{SBMHD:s4}), (\ref{SBMHD:s41}), (\ref{SBMHD:s11}), (\ref{solve:21}), (\ref{SBMHD:s3}), (\ref{Sol35}) and (\ref{sol351}). At that, solutions (\ref{SBMHD:s11}), (\ref{Sol35}) and (\ref{sol351}) represent special cases of formulae (\ref{SBMHD:s0}), (\ref{SBMHD:slin}). It is convenient to sort the obtained solutions in groups.

According to the representation of solution (\ref{SBMHD:14}) Maxwellian surfaces are translational, i.e. are sweeped by the parallel shift of the curve $\bx=\bsigma(k^1,k^3_0)$ along another curve $\bx=\btau(k^2,k^3_0)$ provided $k^3=k^3_0$ is fixed. Curves are parameterized by $k^1$ and $k^2$. Each of these curves can be a straight line (zero curvature), planar (zero torsion), or a general 3D curve. For convenience the curve with the lower ``dimension'' will be called the generatrix, whereas the curve with greater ``dimension'' will be called the directrix.

\begin{theorem}\label{T2}
Solutions of equation (\ref{SBMHD:15}) are exhausted by the following canonical representatives:

\begin{enumerate}
\item Solution with a straight generatrix, given in the general form by equations (\ref{SBMHD:s0}), (\ref{SBMHD:slin}).

\item Solutions with a planar generatrix, given by formulae (\ref{SBMHD:s4}), (\ref{SBMHD:s41}), (\ref{solve:21}).

\item Solutions with spatial generatrix, provided by formula (\ref{SBMHD:s3}).
\end{enumerate}
The complete set of solutions is obtained from the canonical ones by application of equivalence transformations (\ref{SBMHD:BT}) and (\ref{SBMHD:IT}).
\end{theorem}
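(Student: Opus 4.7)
The plan is to assemble Theorem \ref{T2} as a direct corollary of the work already done: Theorem \ref{T1} reduces equation (\ref{SBMHD:15}) to the nine overdetermined systems (\ref{Sep11})--(\ref{Sep35}), and each of these systems has been integrated in the two preceding sections. So the proof of Theorem \ref{T2} is essentially bookkeeping: first I would list the canonical forms actually produced by the integration, namely (\ref{SBMHD:s0}), (\ref{SBMHD:slin}), (\ref{SBMHD:s4}), (\ref{SBMHD:s41}), (\ref{SBMHD:s11}), (\ref{solve:21}), (\ref{SBMHD:s3}), (\ref{Sol35}), (\ref{sol351}), together with the remark that system (\ref{Sep35}) has no non-trivial solutions and that (\ref{Sep22}), (\ref{Sep32}), (\ref{Sep34}) in fact reduce to the forms already listed.

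Next I would prune this list. The claim to verify is that (\ref{SBMHD:s11}), (\ref{Sol35}) and (\ref{sol351}) are not genuinely new. For (\ref{SBMHD:s11}), vector $\bsigma=\pm\sqrt{|k^1|}\,\bfeta(k^3)$ depends on $k^1$ only through a scalar factor, hence $\bsigma_3\times\bsigma_1$ is proportional to $\bfeta\times\bfeta'$ and, after the reparametrisation $\tilde k^1=u(k^1,k^3)$ of type (\ref{SBMHD:BT}), it fits the template (\ref{SBMHD:slin})--(\ref{SBMHD:slin1}). The same argument applies to (\ref{Sol35}) and (\ref{sol351}): in both cases $\bsigma$ is linear (or becomes linear after a $k^1$-reparametrisation) in a pair of $k^3$-dependent vectors, so it is again a cylindrical solution of the form (\ref{SBMHD:s0}) or (\ref{SBMHD:slin}). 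I would check this identification by writing out $\bsigma_1$ explicitly in each case and verifying the factorisation $\bsigma=k^1\bfeta+\bdelta$ or $\bsigma=k^1\bfeta+u\bfeta'+v\bfeta+\bdelta$ modulo equivalence.

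After this reduction only six genuinely distinct canonical forms remain. I would then classify them according to the curvature and torsion of the generatrix $\bx=\bsigma(k^1,k^3_0)$ at a fixed $k^3_0$: the formulae (\ref{SBMHD:s0}) and (\ref{SBMHD:slin}) have $\bsigma$ linear in $k^1$, so the generatrix is a straight line; formulae (\ref{SBMHD:s4}), (\ref{SBMHD:s41}), (\ref{solve:21}) have $\bsigma$ taking values in a fixed two-dimensional subspace spanned by $\be^1,\be^2$ (see (\ref{SBMHD:31}) for the first two and (\ref{solve:21}) directly), so the generatrix is planar; finally (\ref{SBMHD:s3}) contains all three basic vectors essentially, giving a genuinely spatial generatrix. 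This produces the three-fold grouping stated in the theorem.

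The concluding statement about equivalence transformations is already built in: each canonical form was obtained by \emph{using} transformations (\ref{SBMHD:BT}), (\ref{SBMHD:IT}) (together with the shift $\widetilde{\bsigma}=\bsigma-\bepsilon(k^3)$, $\widetilde{\btau}=\btau+\bepsilon(k^3)$) to eliminate redundant arbitrary functions, so by construction the orbits of these transformations acting on the canonical list cover every solution of (\ref{SBMHD:15}). The main obstacle I expect is precisely the identification step in the second paragraph: verifying that the apparently distinct ``special'' solutions (\ref{SBMHD:s11}), (\ref{Sol35}), (\ref{sol351}) really do lie inside (\ref{SBMHD:s0}) or (\ref{SBMHD:slin}) after an equivalence transformation requires a careful change of parameters rather than inspection, and it is where the use of the arbitrary function $\varphi(k^3)$ in (\ref{SBMHD:BT}) becomes essential.
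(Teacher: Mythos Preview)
Your plan is exactly the paper's own: Theorem~\ref{T2} is stated as a summary, with the paragraph preceding it listing the nine outputs (\ref{SBMHD:s0}), (\ref{SBMHD:slin}), (\ref{SBMHD:s4}), (\ref{SBMHD:s41}), (\ref{SBMHD:s11}), (\ref{solve:21}), (\ref{SBMHD:s3}), (\ref{Sol35}), (\ref{sol351}), asserting that (\ref{SBMHD:s11}), (\ref{Sol35}), (\ref{sol351}) are special cases of (\ref{SBMHD:s0}), (\ref{SBMHD:slin}), and then grouping by generatrix type --- you have actually supplied more detail on the pruning step than the paper does. One small slip: for (\ref{solve:21}) the curve $\bsigma=(u^1(k^1),u^2(k^1),k^1)$ is \emph{not} contained in $\mathrm{span}\{\be^1,\be^2\}$; rather it is $\btau=(\tau^1,\tau^2,0)$ that is planar, and by the paper's convention (lower ``dimension'' $=$ generatrix) it is $\btau$ that plays the generatrix role here, so adjust that sentence accordingly.
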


The following theorem introduce additional classification on the set of solutions.

\begin{theorem}\label{T3}
All enumerated in Theorem \ref{T2} solutions are either specializations of solution (\ref{Sol21})--(\ref{Sol23}) with appropriately chosen vectors $\bsigma$ and $\balpha$, or solutions of the form (\ref{SBMHD:slin}).
\end{theorem}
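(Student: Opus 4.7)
The plan is to inspect each of the canonical forms collected in Theorem \ref{T2} and exhibit, for every one of them (except (\ref{SBMHD:slin}) itself, which is already one of the two alternatives in the statement), an explicit presentation as the solution (\ref{Sol21})--(\ref{Sol23}), possibly in the relaxed sense of Remark \ref{rem2}. For each case it suffices to identify the data $\bigl(\bsigma(\lambda),\,\lambda(k^1,k^3),\,\balpha(k^3)\bigr)$ and to verify the two defining conditions (\ref{Sol22}) and (\ref{Sol23}); the vector $\btau$ is the one already supplied by the canonical form.

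The solutions (\ref{SBMHD:s0}), (\ref{SBMHD:s4}), (\ref{SBMHD:s41}) and (\ref{solve:21}) all share the feature that $\bsigma$ depends on $k^1$ through an elementary parameter (taking $\lambda=k^1$ or, in (\ref{SBMHD:s41}), the implicitly defined $\lambda$ determined by (\ref{SBMHD:d61})). In each instance the expansion of $\bsigma$ in the orthonormal basis immediately reads off the components of $\balpha$ from (\ref{Sol22}); any component of $\bsigma$ that vanishes identically forces the associated component of $\balpha$ to be a free function of $(k^2,k^3)$, which is exactly the extension permitted by Remark \ref{rem2}. The required identity $\btau_2\times\btau_3=\balpha$ then reduces to the determinantal relations already built into each canonical form, such as the Jacobian constraint in (\ref{SBMHD:s0}) or the pair (\ref{SBMHD:d62}) in (\ref{SBMHD:s41}). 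These cases are essentially bookkeeping and require only a short unified verification.

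The main obstacle is the spatial-generatrix solution (\ref{SBMHD:s3}), whose $\be^3$-component contains the mixed expression $\alpha(k^3)\tau^1+\beta(k^3)\tau^2+\gamma(k^3)$ and is therefore not manifestly of the additive split $\bsigma(k^1,k^3)+\btau(k^2,k^3)$ that we need. The strategy is to absorb $\alpha\tau^1+\beta\tau^2+\gamma$ entirely into a redefined $\btau$, so that $\bsigma(\lambda)=u^1(\lambda)\be^1+u^2(\lambda)\be^2+S(\lambda)\be^3$ becomes a genuine 3D curve parametrized by $\lambda$; the implicit relation (\ref{SBMHD:51}) then reads precisely as (\ref{Sol22}) with $\balpha=-\alpha\be^1-\beta\be^2+\be^3$. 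The delicate step is the verification of (\ref{Sol23}): direct expansion of $\btau_2\times\btau_3$ produces contributions proportional to $\alpha'\tau^1+\beta'\tau^2+\gamma'$, which must cancel by virtue of (\ref{SBMHD:55}), while the surviving coefficients collapse to $-\alpha$, $-\beta$, and $1$ precisely because of the unit-Jacobian identity (\ref{SBMHD:54c}). This cancellation, which hinges on both auxiliary relations being used simultaneously, is the technical heart of the argument; once it is in place, solution (\ref{SBMHD:s3}) is recognized as a specialization of (\ref{Sol21})--(\ref{Sol23}) and the dichotomy asserted by the theorem is established.
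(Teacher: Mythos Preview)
Your proposal is correct and follows essentially the same route as the paper: for each canonical form you identify $\bsigma(\lambda)$ and $\balpha$, invoke Remark~\ref{rem2} for the components of $\bsigma$ that vanish identically, and check that (\ref{Sol22}) and (\ref{Sol23}) reduce to the determinantal constraints already present in the canonical form. The choice of data in each case matches the paper's exactly.

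There is one minor methodological difference in the treatment of (\ref{SBMHD:s3}). The paper, having fixed $\bsigma=(u^1,u^2,S)^T$ and $\balpha=(-\alpha,-\beta,1)^T$, does not verify (\ref{Sol23}) by direct substitution; instead it \emph{re-solves} the system $\btau_2\times\btau_3=\balpha$ from scratch: dotting with $\btau_2$ and integrating yields $\btau\cdot\balpha=f(k^3)$, which forces the decomposition $\btau=\tau^1\bepsilon^1+\tau^2\bepsilon^2+\bkappa$ with $\bepsilon^i\cdot\balpha=0$, and after using the equivalence transformations (\ref{SBMHD:47}) one lands back on the system (\ref{SBMHD:52a}) already integrated. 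Your approach is the converse: you take the known $\btau$ (with $\alpha\tau^1+\beta\tau^2+\gamma$ absorbed into its third component), compute $\btau_2\times\btau_3$ explicitly, and observe that the $\alpha'\tau^1+\beta'\tau^2+\gamma'$ terms cancel by (\ref{SBMHD:55}) while (\ref{SBMHD:54c}) produces the surviving entries $(-\alpha,-\beta,1)$. Your route is shorter and more transparent as a verification; the paper's route has the slight advantage of showing that \emph{every} solution of (\ref{Sol23}) with this $\balpha$ arises in this way, i.e.\ that the inclusion is in fact an equality up to equivalence, though this stronger statement is not needed for the theorem as formulated.
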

\begin{proof}
For the proof it is required to demonstrate vectors $\bsigma$ and $\balpha$ in formulae (\ref{Sol21})--(\ref{Sol23}), that provide the coincidences of solutions.

1. Inclusion (\ref{SBMHD:s0}) $\subset$ (\ref{Sol21})--(\ref{Sol23}) is provided by the following choice of vectors
\[\bsigma=(k^1,0,0)^T,\quad\balpha=\bigl(1,\alpha^2(k^2,k^3), \alpha^3(k^2,k^3)\bigr)^T.\]
Hence, the second and the third components of equation (\ref{Sol23}) can be taken as definitions of functions $\alpha^2$ and $\alpha^3$. The first component of this equation gives the required restriction for functions $\tau^2$ and $\tau^3$.

2. Inclusion (\ref{SBMHD:s4}) $\subset$ (\ref{SBMHD:s41}). Let us take
\[n(k^3)=0,\quad f(\lambda)=k^1, \quad g(\lambda)=g(k^1).\]
in equation (\ref{SBMHD:s41}). Accurate to the rotation of coordinate axis this gives he solution (\ref{SBMHD:s4}).

3. Inclusion (\ref{SBMHD:s41}) $\subset$ (\ref{Sol21})--(\ref{Sol23}). Let us take
\[\bsigma=\bigl(f(\lambda),g(\lambda),0\bigr)^T,\quad\balpha=\bigl(-n(k^3),1, \alpha(k^2,k^3)\bigr)^T.\]
As before, an arbitrary function $\alpha(k^2,k^3)$ is given by the last component of equation (\ref{Sol23}). The remaining two components of the equation give the required restrictions to the vector $\btau$.

4. Inclusion (\ref{solve:21}) $\subset$ (\ref{Sol21})--(\ref{Sol23}). Let us choose
\[\bsigma=\bigl(u^1(k^1),u^2(k^1),k^1\bigr)^T,\quad\balpha=(0,0,1)^T.\]
Equation (\ref{Sol23}) has a non-trivial solution only if $\tau^3=0$. The required equation for the remaining components of vector $\btau$ follows from the third component of equation (\ref{Sol23}).

5. Inclusion (\ref{SBMHD:s3}) $\subset$ (\ref{Sol21})--(\ref{Sol23}). Here
\[\bsigma=\bigl(u^1(\lambda),u^2(\lambda),S(\lambda)\bigr)^T,\quad\balpha=\bigl(-\alpha(k^3),-\beta(k^3),1\bigr)^T.\]
In this case equation (\ref{Sol23}) for vector $\btau$ accurate to equivalence transformations has a solution, given by formulae (\ref{SBMHD:55}), (\ref{taurep}), (\ref{phieq}). Indeed, by scalar multiplication of equation (\ref{Sol23}) on $\btau_2$ and integration with respect to $k^2$ one obtains an integral
\[\btau(k^2,k^3)\cdot\balpha(k^3)=f(k^3).\]
It implies, that vector $\btau$ has the representation
\[\btau=\tau^1(k^2,k^3)\bepsilon^1(k^3)+\tau^2(k^2,k^3)\bepsilon^2(k^3)+\bkappa(k^3),\]
where
\begin{equation}\label{Eqstau}
\bepsilon^1\cdot\balpha=0,\quad\bepsilon^2\cdot\balpha=0,\quad\bkappa\cdot\balpha=f(k^3).
\end{equation}
By virtue of arbitrariness in the choice of functions $\btau^1$, $\btau^2$, equivalence transformations (\ref{SBMHD:47}) act on vectors $\bepsilon^1$, $\bepsilon^2$, and $\bkappa$. Due to these transformations, taking into account equations (\ref{Eqstau}) these vectors can be brought to a form (\ref{SBMHD:48}). In this representation equation (\ref{Sol23}) is equivalent to the system of equations (\ref{SBMHD:52a}). All the remaining considerations coincide with those used above for obtaining of the solution.
\end{proof}

\section{Conclusion} It is shown that all solutions of ideal MHD equations, that describe stationary flows with constant total pressure belong to classes given by formulae (\ref{SBMHD:slin}) and (\ref{Sol21})--(\ref{Sol23}). To prove this statement a curvilinear system of coordinates, in which streamlines and magnetic lines play a role of coordinate curves was used. It was shown, that Maxwellian surfaces, spanned by streamlines and magnetic lines, belong to a class of translational surfaces. Formulae (\ref{SBMHD:slin}) and (\ref{Sol21})--(\ref{Sol23}) provide restrictions of possible forms of directrix and generatrix of the surfaces.

Maxwellian surfaces determined by solution (\ref{SBMHD:slin}) are cylindrical. The directrix can be chosen with functional arbitrariness according to formulae (\ref{SBMHD:slin}), (\ref{SBMHD:slin1}). The direction of the generatrix can change while moving from one Maxwellian surface to another according to the modification of vector $\bfeta(k^3)$. Functional arbitrariness allows one to control the shape of streamlines and magnetic lines on Maxwellian surfaces.

For the family of solutions described by formulae (\ref{Sol21})--(\ref{Sol23}) all Maxwellian surfaces have common directrix, determined by the dependence $\bsigma(\lambda)$. Functional arbitrariness of the solution allows varying of the picture of the described fluid motion. In particular, solutions of this type describe stationary flows in arched jets, that can be used for description of solar prominences. For practical computations it is convenient to use special forms of solutions determined by formulae (\ref{SBMHD:s4}), (\ref{solve:21}).

The work is supported by RFBR (grant 11-01-00026), Presidential programs N.Sh-4368.2010.1 and MD-168.2011.1, and Federal Purpose Program 02.740.11.0617.

\appendix
\section{Separation of variables in equation (\ref{SBMHD:15})}\label{AppA} According to Lemma \ref{SBMHD:l2} and by virtue of symmetry of equation it is sufficient to observe cases $s=0,1,2$.

\underline{Case $s=0$.} Here matrix $C$ is absent. Let us express the 6-dimensional vector $\bd$ as a union of two 3-dimensional: $\bd=\bigl(\bmu(k^3),\bnu(k^3)\bigr)$. Hence,
\begin{equation}\label{SBMHD:18}
\bsigma_3\times\bsigma_1=\bmu(k^3),\quad\bsigma_1=\bnu(k^3).
\end{equation}
At that, in order to guarantee the dependence of vector $\bsigma$ on coordinate $k^1$, it is supposed that $\bnu\ne 0$. Note that accurate to equivalence transformation, this solution belong to class of solutions with cylindrical Maxwellian surfaces, described in section \ref{CylMax}. However, for the sake of completeness, this case will be observed separately.

The scalar multiplication of left- and right-hand sides of equations (\ref{SBMHD:18}) gives $\bmu\cdot\bnu=0$. This equation allows representation of vector $\bmu$
\[\bmu=\bar{\bmu}\times\bnu\]
with some vector $\bar{\bmu}(k^3)$. By plugging the obtained representation of vector $\bmu$ into the first equation of (\ref{SBMHD:18}) by virtue of the second equation one obtains
\[\bsigma_3\times\bsigma_1=\bar{\bmu}\times\bnu=\bar{\bmu}\times\bsigma_1\quad\Rightarrow\quad \bigl(\bsigma_3-\bar{\bmu}(k^3)\bigr)\times\bsigma_1=0.\]
Accurate to equivalence transformations one can take $\bar{\bmu}(k^3)=0$. Thus, $\bsigma_3\times\bsigma_1=0$, and system of equations (\ref{Sep11}) follows.

\underline{Case $s=1$.} Lemma \ref{SBMHD:l2} gives the following equations
\begin{equation}\label{SBMHD:22}
\bsigma_3\times\bsigma_1=U(k^1,k^3)\bmu(k^3)+\bnu(k^3), \quad\bsigma_1=U(k^1,k^3)\bepsilon(k^3)+\bkappa(k^3)
\end{equation}
with some auxiliary function $U(k^1,k^3)$ such that $\partial U/\partial k^1\ne0$, and 6-dimensional linearly independent vectors $(\bmu,\bepsilon)$ and $(\bnu,\bkappa)$. Cases of different dimension of the linear space $\{\bepsilon,\bkappa\}$ spanned by vectors $\bepsilon$ and $\bkappa$ for various values of parameter $k^3$ are observed.

{\sf 1) $\dim\{\bepsilon,\bkappa\}=2$.} By scalar multiplication of left- and right-hand sides of equations (\ref{SBMHD:22}) and by splitting on the different powers of function $u$ one obtains the following relations:
\begin{equation}\label{SBMHD:23}
\bmu\cdot\bepsilon=\bnu\cdot\bkappa=0,\quad\bmu\cdot\bkappa+\bnu\cdot\bepsilon=0.
\end{equation}
From the former two equations of system (\ref{SBMHD:23}) the representations follow:
\begin{equation}\label{SBMHD:27}
\bmu=\bbmu\times\bepsilon,\quad\bnu=\bbnu\times\bkappa
\end{equation}
with suitable vectors $\bbmu$ and $\bbnu$. By plugging this equalities into the latter equation of (\ref{SBMHD:23}) one obtains
\[(\bbmu\times\bepsilon)\cdot\bkappa+(\bbnu\times\bkappa)\cdot\bepsilon= 0\quad\Rightarrow\quad(\bbnu-\bbmu)\cdot(\bepsilon\times\bkappa)=0.\]
Hence,
\begin{equation}\label{SBMHD:28}
\bbnu=\bbmu+\alpha(k^3)\bepsilon+\beta(k^3)\bkappa
\end{equation}
with some functions $\alpha$, $\beta$. Since vectors $\bbmu$ and $\bbnu$ are defined accurate to additives, proportional to vectors $\bepsilon$ and $\bkappa$ respectively, in equations (\ref{SBMHD:28}) without limiting of generality one can assume $\alpha=\beta=0$. By using representations (\ref{SBMHD:27}) with $\bbnu=\bbmu$ in the first equation of (\ref{SBMHD:22}), taking into account the second equation of (\ref{SBMHD:22}), and collecting terms one obtains
\[(\bsigma_3-\bbmu)\times\bsigma_1=0.\]
By virtue of dependence of vector $\bbmu$ only on $k^3$ and due to the arbitrariness of vector $\bsigma$, without limiting the generality, one can assume $\bbmu=0$, i.e. $\bsigma_3\times\bsigma_1=0$. Substitution of obtained representations into equations (\ref{SBMHD:17}) and splitting on powers of function $U$ gives equations for vector $\btau$, specified in system (\ref{Sep21}).

{\sf 2) $\dim\{\bepsilon,\bkappa\}=1$.} Two cases are possible: either a) $\bkappa=\alpha(k^3)\bepsilon$, or b) $\bepsilon=0$ and $\bkappa\ne0$. In case a) by a suitable choice of function $U$ in representation (\ref{SBMHD:22}) one can make $\bkappa=0$, i.e.
\begin{equation}\label{SBMHD:22a}
\bsigma_1=U(k^1,k^3)\bepsilon.
\end{equation}
Dot product of obtained equation with the first of equations (\ref{SBMHD:22}), and splitting with respect to powers of function $U$ we obtain equalities, analogous to equations (\ref{SBMHD:23}):
\begin{equation}\label{SBMHD:23a}
\bmu\cdot\bepsilon=0,\quad\bnu\cdot\bepsilon=0.
\end{equation}
Equations (\ref{SBMHD:23a}) yield the representation
\begin{equation}\label{SBMHD:27a}
\bmu=\bbmu\times\bepsilon,\quad\bnu=\bbnu\times\bepsilon
\end{equation}
with suitable vectors $\bbmu$ and $\bbnu$. Substitution of representations (\ref{SBMHD:27a}) into the first of equations (\ref{SBMHD:22}) taking into account equation (\ref{SBMHD:22a}) leads to the relation $(\bsigma_3-\bbmu)\times\bsigma_1=\bbnu\times\bepsilon$. Arbitrariness in the choice of vector $\bsigma$ allows simplification $\bbmu=0$. Thus, for vector $\bsigma$ one obtains the following relations:
\begin{equation}\label{SBMHD:28a}
\bsigma_3\times\bsigma_1=\bbnu\times\bepsilon,\quad\bsigma_1=U(k^1,k^3)\bepsilon.
\end{equation}
Substitution of equation (\ref{SBMHD:28a}) into (\ref{SBMHD:17}) and splitting with respect to $U$ results equations (\ref{Sep22}) for vectors $\bsigma$ and  $\btau$.

Similar actions in case b) leads to equations (\ref{Sep23}).

\underline{Case $s=2$.} The following equations are to be observed:
\begin{eqnarray}\label{SBMHD:39}
&&\bsigma_3\times\bsigma_1=U^1(k^1,k^3)\bmu^1(k^3)+U^2(k^1,k^3)\bmu^2(k^3)+\bnu(k^3),\\[2mm]\label{SBMHD:40} &&\bsigma_1=U^1(k^1,k^3)\bepsilon^1(k^3)+U^2(k^1,k^3)\bepsilon^2(k^3)+\bkappa(k^3)
\end{eqnarray}
Here 6-dimensional vectors $(\bmu^1,\bepsilon^1)$, $(\bmu^2,\bepsilon^2)$, $(\bnu,\bkappa)$ are linearly independent for every value of parameter $k^3$; functions $U^1$, $U^2$ and $1$ are linearly independent as functions of $k^1$ (coefficients in linear combinations of functions can depend on $k^3$). The classifying parameter is the dimension of the linear space $\{\bepsilon^1,\bepsilon^2,\bkappa\}$ spanned by vectors $\bepsilon^1$, $\epsilon^2$, and $\bkappa$ for different values of parameter $k^3$.

{\sf 1) $\dim\{\bepsilon^1,\bepsilon^2,\bkappa\}=3$.} The dot product of left and right-hand sides of equations (\ref{SBMHD:39}), and (\ref{SBMHD:40}), splitting with respect to linearly independent functions $U^1$, $U^2$, and $1$ yield:
\begin{equation}\label{SBMHD:41}
\begin{array}{l}
\bmu^1\cdot\bepsilon^1=\bmu^2\cdot\bepsilon^2=\bnu\cdot\bkappa=0,\\[2mm]
\bmu^1\cdot\bepsilon^2+\bmu^2\cdot\bepsilon^1=0,\quad\bmu^1\cdot\bkappa+\bnu\cdot\bepsilon^1=0,\quad\bmu^2\cdot\bkappa+\bnu\cdot\bepsilon^2=0.
\end{array}
\end{equation}
From the former three equations of (\ref{SBMHD:41}) the representations follow:
\begin{equation}\label{SBMHD:42}
\bmu^1=\bbmu^1\times\bepsilon^1,\quad\bmu^2=\bbmu^2\times\bepsilon^2,\quad\bnu=\bbnu\times\bkappa
\end{equation}
with some vectors $\bbmu^i$, $\bbnu$. Substitution of these representations into the latter three equations of (\ref{SBMHD:41}) gives
\[\fl(\bbmu^1-\bbmu^2)\cdot(\bepsilon^1\times\bepsilon^2)=0,\quad(\bbmu^1-\bbnu)\cdot(\bepsilon^1\times\bkappa)=0,\quad
(\bbmu^2-\bbnu)\cdot(\bepsilon^2\times\bkappa)=0\]
Taking into account linear independence of vectors $\bepsilon^1$, $\bepsilon^2$, $\bkappa$, and the arbitrariness of vectors $\bbmu^i$, $\bbnu$, this implies
\[\bbmu^1=\bbmu^2=\bbnu.\]
Substitution of formulae (\ref{SBMHD:42}) with the preceding equations into (\ref{SBMHD:39}) after some simplification leads to
\[\bsigma_3\times\bsigma_1=\bbnu\times\bigl(U^1\bepsilon^1+U^2\bepsilon^2+\bkappa\bigr)=\bbnu\times\bsigma_1\quad
\Rightarrow\quad(\bsigma_3-\bbnu)\times\bsigma_1=0.\]
Since vector $\bsigma$ is defined accurate to the additive vector, depending only on $k^3$, without restriction of generality one can assume $\bbnu=0$.

Substitution of the obtained representation of vector $\bsigma$ into equation (\ref{SBMHD:17}) and splitting with respect to the remaining linearly independent functions $U^1$, $U^2$, and $1$ produces equations (\ref{Sep31}).

{\sf 2) $\dim\{\bepsilon^1,\bepsilon^2,\bkappa\}=2$.} The following subcases are possible: a) vector $\bkappa$ can be linearly expressed in terms of vectors $\bepsilon^1$ and $\bepsilon^2$, or b) vectors $\bepsilon^1$ and $\bepsilon^2$ are proportional or one of them is equal to zero.

In case a) by addition of terms, depending only on  $k^3$ to $U^1$ and $U^2$, one can make $\bkappa=0$. Dot product of left- and right-hand sides of equations (\ref{SBMHD:39}), (\ref{SBMHD:40}) and splitting with respect to functions $U^i$ leads to equations (\ref{SBMHD:41}) with $\bkappa=0$. These equations, in turn, give
\begin{equation}\label{SBMHD:42a}
\bmu^1=\bbmu\times\bepsilon^1,\quad\bmu^2=\bbmu\times\bepsilon^2,\quad\bnu=\delta(k^3)\bepsilon^1\times\bepsilon^2
\end{equation}
with some vector $\bbmu(k^3)$ and function $\delta(k^3)$. Substitution of obtained representations into equation (\ref{SBMHD:39}) and action of equivalence transformations lead to equations (\ref{Sep32}).

In case b) by a proper non-degenerate change of functions $U^1$, and $U^2$ it is always possible to make $\bepsilon^2=0$ in equation (\ref{SBMHD:40}). In this case relations (\ref{SBMHD:41}) produce representations for vectors $\bmu^i$, and $\bnu$ in the form
\begin{equation}\label{SBMHD:42b}
\bmu^1=\bbmu\times\bepsilon^1,\quad\bmu^2=\alpha\bepsilon^1\times\bkappa,\quad\bnu=\bbmu\times\bkappa
\end{equation}
with some vector $\bbmu$. The system of equations (\ref{Sep33}) is obtained in the same way as in previous case.

{\sf 3) $\dim\{\bepsilon^1,\bepsilon^2,\bkappa\}=1$.} The following subcases are possible: a) vectors $\bkappa$ and $\bepsilon^2$ linearly related with vector $\bepsilon^1$, or b) vectors are zero: $\bepsilon^1=\bepsilon^2=0$. By the analogous calculations one can show that case a) gives a system of equations (\ref{Sep34}), and case b) reduces to the system (\ref{Sep35}).


\begin{thebibliography}{1}

\bibitem{Bogoyavl2002}
O.~I. Bogoyavlenskij.
\newblock Symmetry transforms for ideal magnetohydrodynamics equilibria.
\newblock {\em Phys. Rev. E}, 66(5):056410, 2002.
\bibitem{Elsaesser1950}
W.~M. {Els\"{a}sser}.
\newblock The hydromagnetic equations.
\newblock {\em Phys. Rev.}, 79(1):183, 1950.
\bibitem{Golovin2002en}
S.~V. Golovin.
\newblock Exact solutions for evolutionary submodels of gas dynamics.
\newblock {\em Journal of Applied Mechanics and Technical Physics}, 43(4):493--502, 2002.
\bibitem{Golovin2010}
S.~V. Golovin.
\newblock {Analytical description of stationary ideal MHD flows with constant
  total pressure}.
\newblock {\em Phys. Lett. A}, 374:901--905, 2010.
\bibitem{Golovin2011}
S.~V. Golovin.
\newblock Natural curvilinear coordinates for ideal {MHD} equations.
  {N}on-stationary flows with constant total pressure.
\newblock {\em Phys. Lett. A}, 375(3):283 -- 290, 2011.
\bibitem{Schief2003}
W.~K. Schief.
\newblock {Hidden integrability in ideal magnetohydrodynamics: The
  Pohlmeyer--Lund--Regge model}.
\newblock {\em Phys. of Plasm.}, 10(7):2677--2685, 2003.
\bibitem{Schutz1980en}
B.~F. Schutz.
\newblock {\em {Geometrical methods of mathematical physics}}.
\newblock {Cambridge: Cambridge University Press}, 1980.

\end{thebibliography}
\end{document}